\DeclareMathOperator{\blockdiag}{blockdiag}
\DeclareMathOperator{\diag}{diag}
\DeclareMathOperator{\Cov}{Cov}
\DeclareMathOperator{\Tr}{Tr}
\theoremstyle{plain}
\newtheorem{theorem}{Theorem}[section]
\newtheorem{proposition}{\textbf{Proposition}}[section]
\newtheorem{lemma}{\textbf{Lemma}}[section]
\newtheorem{definition}{\textbf{Definition}}[section]
\newtheorem{assumption}{\textbf{Assumption}}[section]
\theoremstyle{remark}
\begin{document}

\title{Distributed Design of Robust Kalman Filters over Corrupted Channels}

\author{
	Xingkang~He,  
	Karl H. Johansson,  
 Haitao Fang	
	\thanks{The work is supported by the Knut \& Alice Wallenberg Foundation, the Swedish Research Council, and 	
		the National Natural Science Foundation of China (Grant No. 11931018)}
	\thanks{X. He and K. H. Johansson are with the Division of Decision and Control Systems, School of Electrical Engineering and Computer Science. KTH Royal Institute of Technology, SE-100 44 Stockholm, Sweden (xingkang@kth.se, kallej@kth.se)}
	\thanks{H. Fang is with LSC, NCMIS, Academy of Mathematics and Systems Science,
		Chinese Academy of Sciences, Beijing 100190, China; He is also with School of Mathematical Sciences, University of Chinese Academy of Sciences, Beijing 100049, China (htfang@iss.ac.cn)}
}

\maketitle

{ 
	\begin{abstract}
		We study  distributed filtering for a class of uncertain  systems over corrupted communication channels.
		We  propose a distributed robust Kalman filter with stochastic gains, through which upper bounds of  the conditional mean square estimation errors are calculated online.  We  present a robust collective observability  condition, under which the mean square error of the distributed    filter is proved to be uniformly upper bounded if the  network is strongly connected.  For better performance, we modify the  filer by introducing a switching fusion scheme based on a sliding window. It provides a smaller upper bound of the conditional mean square error.  
		Numerical simulations are provided to  validate the  theoretical results and show that the filter scales to large networks.
		\end{abstract}
		
	}

\begin{IEEEkeywords}
Sensor network,  distributed filtering,  robust Kalman filter,  corrupted channel
\end{IEEEkeywords}


\section{Introduction}
In recent years, networked state estimation problems for sensor networks are drawing more and more attention due to their many applications \cite{rao1991fully,yu2016distributed,saucan2017multisensor}.
{   Compared to the centralized methods,  distributed algorithms, implemented at each sensor, are more resilient to network vulnerabilities, require less energy-consuming communication, and are able to perform parallel processing.
} Thus, a growing number of  researchers are focusing on the study of distributed state estimation problems \cite{Olfati2007Distributed,olfati2009kalman,khan2008distributing,kar2011gossip,gupta2015error}.
{ System uncertainties and communication imperfections pose, however, great challenges to the implementation and use of  existing distributed filters.  Thus, it is important to study distributed  robust filters for   real-time state estimation of uncertain systems.}



    { System uncertainties exist in most applications in both the dynamics and measurements.}
Multiplicative noise arises in many situations \cite{tuzlukov2002signal}.  
When  system dynamics suffer  multiplicative noise, it is challenging to design effective filters due to the state-dependent uncertainty. 
The authors in \cite{yang2002robust} studied   centralized estimation problems for systems with  multiplicative noise and parameter uncertainties.
In  \cite{feng2013distributed},  distributed fusion estimation   for   systems with multiplicative and correlated noise was studied. In \cite{liu2016minimum}, the authors studied  distributed filtering  for  systems with multiplicative noise in the dynamics when the network is given by a complete graph. 
      Measurement degradation  usually comes from sensor or communication limitations \cite{liu2016minimum,wen2016recursive,Yang2014Stochastic}.
    A detailed study on Kalman filters with measurement degradations was given in \cite{dey2009kalman}.
     In \cite{wen2016recursive}, a distributed filter was proposed for a state-saturated system with degraded measurements and quantization effects.
      A robust estimation problem based on   randomly dropped measurements was studied in  \cite{zhou2016robust}.
     A distributed robust filter   was  provided in \cite{ugrinovskii2011distributed} for a class of linear systems with uncertain measurements.     Moreover, to deal with
     		random changes in model   structures and parameters in the real systems, 	
     		some  robust  filtering  approaches were proposed for  systems with unknown parameters under non-Gaussian measurement noise  \cite{prsicrobust,stojanovic2016joint,stojanovic2020state}	and for nonlinear uncertain Markov  jump systems  \cite{yin2014robust,dong2020robust}. { Most of the above results were studied in a centralized framework, and
     	for the distributed algorithms, few connections between  filter performance and system uncertainties were provided.}

In the  literature of distributed estimation over sensor networks \cite{Cat2010Diffusion,stankovic2009consensus,Yang2014Stochastic,liu2016minimum,li2017distributed,wen2016recursive,He2017Consistent,he2019distributed,he2020distributed}, { a common assumption is that the communications between sensors are noise-free. This is,  however, difficult to fulfill in practice \cite{kar2008distributed}}. Uncertainty induced by channel noise makes it more challenging to design and analyze distributed filters.
The authors in \cite{Khan2014Collaborative} investigated the design of distributed filters with constant filtering gains and fusion weights, and gave  conditions to ensure the boundedness of the mean square error (MSE).
 In \cite{Cat2010Diffusion}, a   distributed filter was proposed  by combining a diffusion step with the Kalman filter. 
The filter performance  was analyzed under the assumption that  each sub-system  is observable, which    is a restrictive condition for  high-dimensional systems.
Time-varying  distributed filters can  achieve better performance than static \cite{speranzon2008distributed,Battistelli2014Kullback, Wang2017On}. However, 
authors of \cite{speranzon2008distributed,Battistelli2014Kullback, Wang2017On}   all assumed  perfect communication.
Although \cite{ji2017distributed} studied the case that the state estimates suffer channel noise,   the  parameter matrices were required to be perfectly transmitted. 
{  The design of  distributed robust filters exposed to corrupted communication channels   needs further investigation.}

The main contributions of this paper are summarized in the following.
{ \begin{itemize}
		\item  For  systems suffering multiplicative stable noise   and   measurements exposed to fading and additive noise, we design a robust distributed Kalman filter  able to handle corrupted communication channels (Algorithm \ref{alg:A}). The filter is shown to be conditionally consistent in the sense that the MSE is conditionally bounded.

		\item We  extend   traditional collective observability to   robust collective observability, under which the MSE of the distributed robust Kalman filter is proved to be uniformly upper bounded for any strongly connected network (Theorem \ref{theorem_boundedness}). 
		
		\item We modify the proposed distributed robust Kalman filer by introducing a switching fusion scheme based on a sliding window  and past state estimates (Algorithm \ref{alg:B}). Adaptive covariance intersection (CI) weights are obtained by solving semi-definite programming (SDP) problems at the preset intervals. It is proved that the modified filter inherits the main properties of the distributed robust Kalman filter (Theorem \ref{theorem_boundedness2}),  but in addition	 provides a smaller upper bound of the conditional MSE.  
\end{itemize}}

This paper presents  significant contributions compared to the existing literature. In particular, first, compared to   \cite{Cat2010Diffusion,stankovic2009consensus,Yang2014Stochastic,liu2016minimum,li2017distributed,wen2016recursive,He2017Consistent,he2019distributed} where the communications are required to be noise-free, or \cite{ji2017distributed} where  the transmitted state estimates suffer channel noise, this paper studies a more general case of channel corruption. We allow that both the transmitted estimates and parameter matrices can be polluted by channel noise. 
Second,  this paper does not make the assumption   that the nominal systems have to be stable \cite{liu2016minimum,li2017distributed,wen2016recursive} or that each sub-system  is observable \cite{Cat2010Diffusion,stankovic2009consensus,Yang2014Stochastic}.
{  Moreover, different from \cite{liu2016minimum,li2017distributed,wen2016recursive},  the design of the filters in this paper is based on the  information from the local sensor and the neighbor communications.}
Third, compared with the existing results \cite{ji2017distributed,liu2016minimum,li2017distributed,wen2016recursive,He2017Consistent,he2019distributed}, using the neighbor estimates in a sliding window, the switching fusion scheme of this paper can utilize the state estimates more efficiently.

The remainder of this paper is organized as follows: Section~\uppercase\expandafter{2} presents the problem formulation. The filter design and performance analysis are  given in Section~\uppercase\expandafter{3}.  Section~\uppercase\expandafter{4} provides the modified filter based on
a sliding-window  method. After Section~\uppercase\expandafter{5} gives  numerical simulations,  Section~\uppercase\expandafter{6} concludes this paper.

\subsection*{Notations}
 Superscript $T$ represents transpose. The notation $A\geq B$ ($A>B$), where $A$ and $B$ are real symmetric matrices, means that $A-B$ is a positive semidefinite (positive definite) matrix.
 We denote $\textbf{1}_n$   an $n$-dimensional vector with all elements   one,   $I_{n}$   the identity matrix with $n$ rows and  columns,  
$\mathbb{R}^n$   the set of $n$-dimensional real vectors, and $\mathbb{N}$   the set of natural numbers. The operator $E\{x\}$ denotes the mathematical expectation of the stochastic vector $x$,  and $\Cov\{x\}=E\{(x-E\{x\})(x-E\{x\})^T\}$. We use $\blockdiag\{\cdot\}$ and $\diag\{\cdot\}$ to represent the diagonalizations of  square matrix elements and scalar elements, respectively.  The trace of matrix $P$ is denoted by $\Tr(P)$. For a real-valued matrix $A$,
$\rho(A)$ denotes the spectral radius and  $\|A\|_2=\sqrt{\rho(A^TA)}$. The scalar $\lambda_{\max}(B)$ is the maximal eigenvalue of the real-valued symmetric matrix $B$, and  $\sigma(\cdot)$ is the minimal $\sigma$-algebra operator generated by   a collection of subsets.
{For reading convenience, main symbols of this paper are provided in Table \ref{table_notations}.}

\begin{table*}[htbp]
	{{ 
		\caption{Main symbols in this paper: $k$ and $m$ stand for time instants, $i$ and $j$ stand for sensor labels}  
		\label{table_notations}  
		\scalebox{1}{
			{\renewcommand{\arraystretch}{1.5} 
				\begin{tabular}{|c|c|c|c|c|c|c|c|}
					\hline  
					symbol   & meaning&symbol &meaning& symbol&meaning&symbol   & meaning\\
					\hline 
					
					$x_{k}$ & state& $w_k$&process noise& $y_{k,i}$&measurement& $v_{k,i}$&measurement noise\\		\hline
					
					$\epsilon_k$& multip. noise&
					 $\gamma_{k,i}$ &fading factor&$N$ &sensor number&$A_{k}$ &system matrix\\		\hline
					 
					 $C_{k,i}$ &measurement matrix &$F_{k}$ &multip. noise matrix&$\mathcal{V}$ &node set&$\mathcal{E}$ &link set\\		\hline

			$\mathcal{V}=[a_{i,j}]$ &adjacency  matrix&$\mathcal{N}_i$ &neighbor set &$\varepsilon_{k,i,j}$ &channel noise&$\mathcal{D}_{k,i,j}$ &channel noise \\\hline

						$Q_{k}$ &cov. bound of $w_k$ &$\mu_{k}$ &cov. bound of $\epsilon_k$ &$\varphi_{k,i}$ &cov. bound of $\gamma_{k,i}$ &$ R_{k,i}$ &cov. bound of $v_{k,i}$ \\\hline
						
$\varUpsilon_{i,j}$ &bound of $\varepsilon_{k,i,j}$&$\mathcal{D}_{i,j}$ &bound of $\mathcal{D}_{k,i,j}$ & $P_{0}$ &bound of $E\{x_0x_0^T\}$& $\tau_{k,i}$ & mean of $\gamma_{k,i}$\\\hline

$\hat x_{k,i}$ &fused estimate&$\bar x_{k,i}$ &predicted estimate 
&
$\tilde x_{k,i}$ &updated estimate&
$\hat{\tilde x}_{k,i,j}$ &corrupted $\tilde x_{k,i}$\\\hline

$\mathcal{W}_{k}$ &channel noise $\sigma$-algebra
 &$K_{k,i}$ &filter gain
&
$W_{k,i,j}$ &fusion weight &$\varPi_{k}$ &bound of $ E\{x_{k}x_{k}^T\}$\\\hline

 $\bar P_{k,i}$ &parameter  in prediction& $\tilde P_{k,i}$ &parameter  in update 
&$P_{k,i}$ &parameter  in fusion&$
\bar{\tilde P}_{k,i,j}$ &corrupted $
\tilde P_{k,j}$\\\hline

 $\Delta_{i}$ &optimization interval&$L$ &window length
&
$\Phi_{k,m}$ &transition matrix& $\bar N$ &observability parameter \\\hline

				\end{tabular}
			}
		}
	}}
\end{table*}

\section{Problem Formulation}
This section presents a motivating example followed by some preliminaries together with the problem formulation.

\subsection{Motivating example}
\begin{figure}[htp] \centering
	\subfigure[Evolution of the field] { \label{fig:observe}
		\includegraphics[scale=0.3]{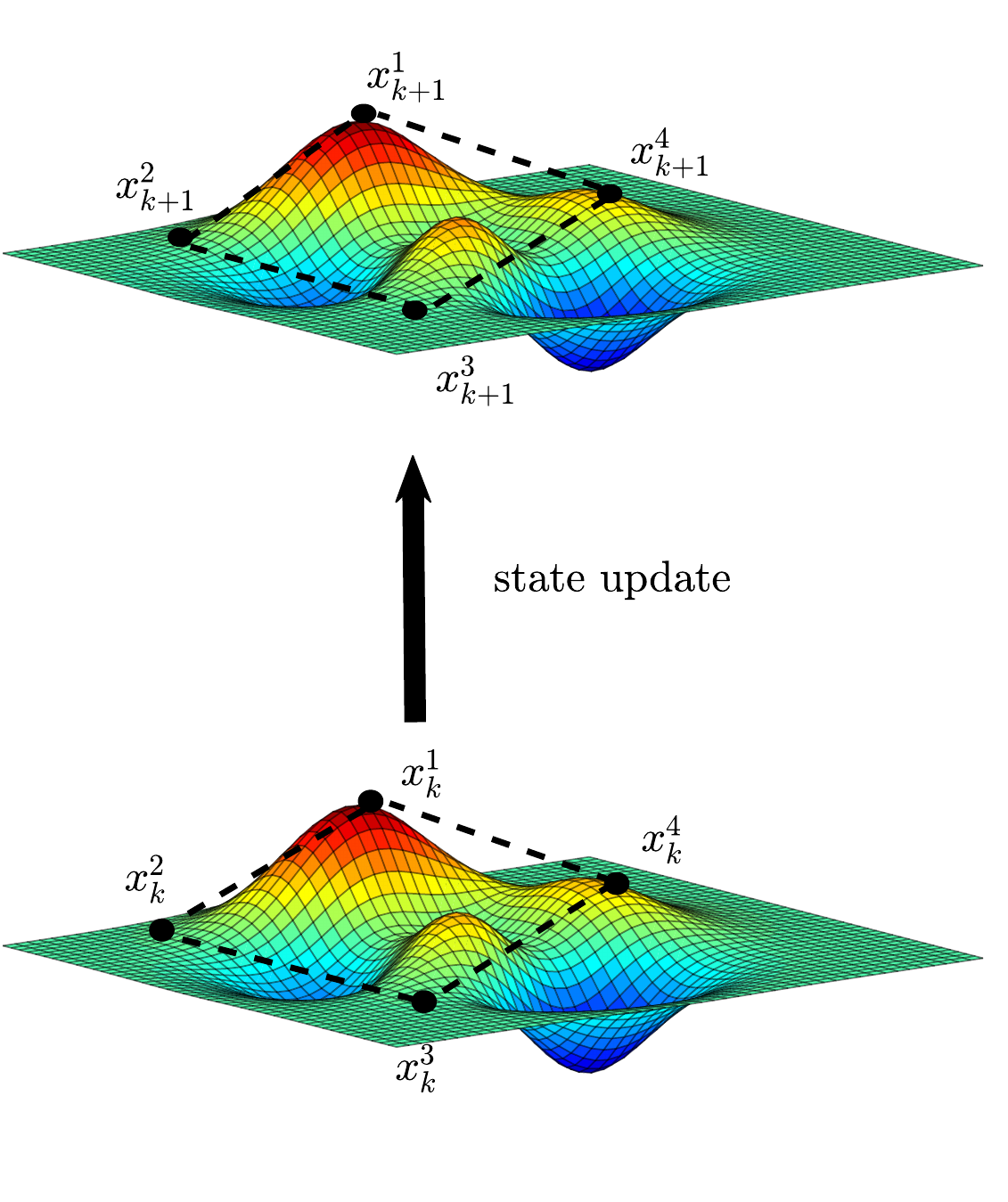}
	}
	\subfigure[Distributed sensing and estimation] { \label{fig:communicate}
		\includegraphics[scale=0.35]{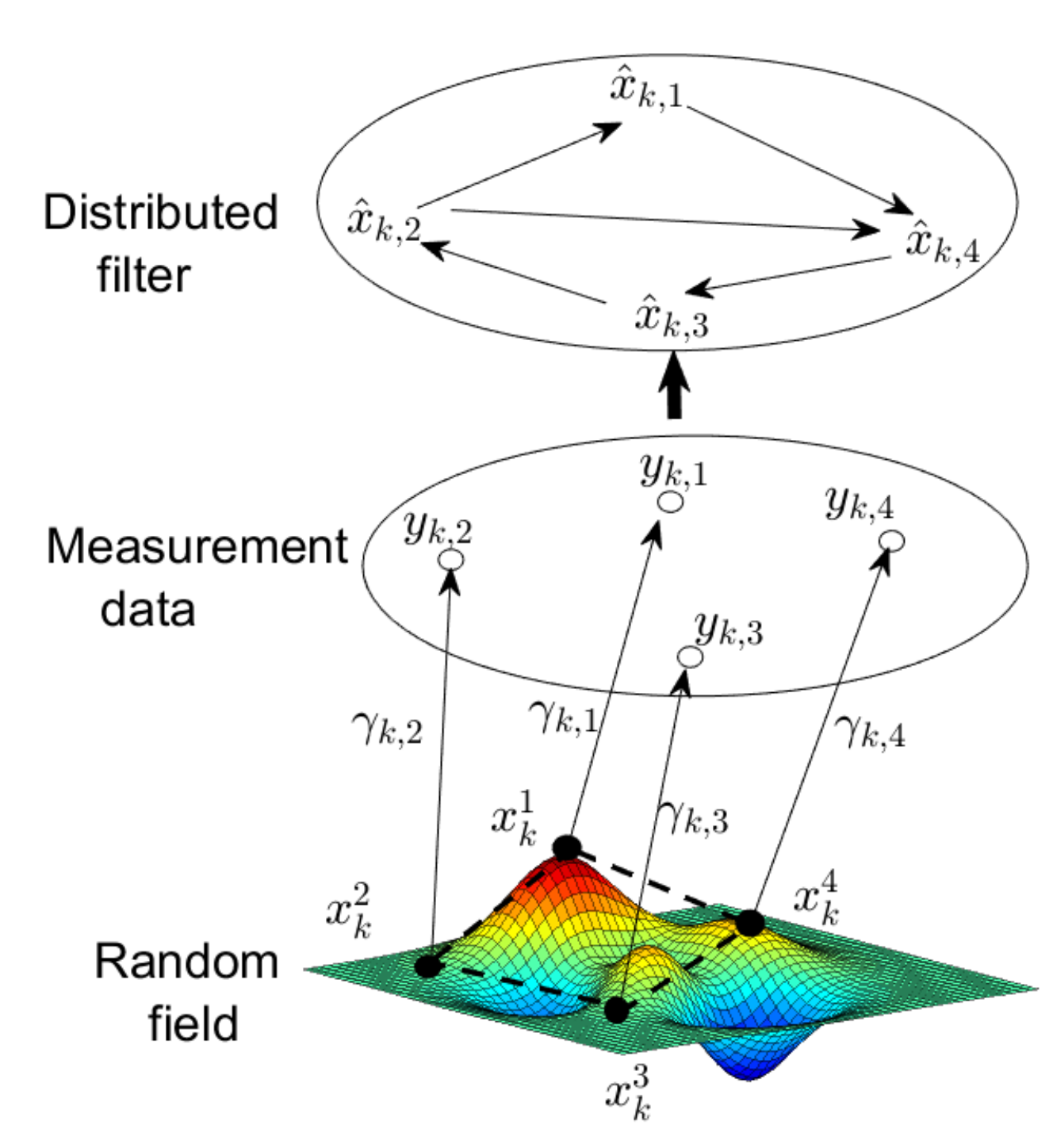}
	}
	\caption{ A random temperature field  over a geographical area. The evolution of  the  field  is driven by some stochastic process $w_{k}$. The right figure illustrates that sensors obtain  corrupted measurements of the temperature state, and communicate with other sensors   over a network to achieve an  estimate of the overall state.
		 }
	\label{fig:random_illustration}
\end{figure}

{ In a spatially distributed physical system, let the  state vector  consist of elements over a large geographical area.
	The evolution of the state is related to  spatial and temporal  dynamics.
	Sensors located at different positions can collaborate   based on their  intermittent measurements of partial elements of the state.
	The state   and the measurements  are   polluted by noise. 
	A  random  dynamic field   driven by noise $w_{k}$ and monitored by a sensor network is shown in Fig. \ref{fig:random_illustration}, cf. \cite{das2017consensus}. The variable $x_{k}^i$ stands for the temperature in station $i$ at time $k$. Colors   represent  values of $x_{k}^i$. 
	The problem considered in this paper is how to
	design a distributed robust filter based on  the corrupted measurements $y_{k,i},k\in\mathbb{N},i=1,\dots,4,$ and the collaboration of the sensors,  such that  the overall temperature field $x_{k}$ can be effectively estimated by each sensor.}

\subsection{Preliminaries}
Consider the system dynamics
\begin{equation}\label{system_all}
    x_{k+1}=(A_{k}+F_{k}\epsilon_k)x_{k}+w_{k},
\end{equation}
where $x_{k}\in \mathbb{R}^n$ denotes the system state vector, $w_{k}\in \mathbb{R}^n$   the independent process noise with zero mean, $\epsilon_k\in \mathbb{R}$   the independent multiplicative noise also with zero mean.  The matrices $F_{k}$, $k\in\mathbb{N}$, are non-singular matrices.

The system state is monitored by a sensor network with $N$ sensors
\begin{equation}\label{system_all2}
y_{k,i}=\gamma_{k,i}C_{k,i}x_{k}+v_{k,i}, i=1,\dots,N,
\end{equation}
where
 $y_{k,i}\in \mathbb{R}^{m_{i}}$ stands for  the measurement vector of sensor $i$, $v_{k,i}\in \mathbb{R}^{m_{i}}$   the independent measurement noise with zero mean and $\gamma_{k,i}\in \mathbb{R}$    the independent random fading factor   in the interval $[0,1]$ with $E\{\gamma_{k,i}\}=\tau_{k,i}$, where $0< \tau_{k,i}\leq 1$ is a known scalar, all at time $k=1,2,\dots.$.
%
The matrices $A_{k}$, $F_{k},$ and $C_{k,i}$ have   appropriate dimensions and are known to sensor $i$.

We model the sensor communications as a directed graph $\mathcal{G}=(\mathcal{V},\mathcal{E},\mathcal{A})$, which consists of   nodes
$\mathcal{V}=\{1,2,\ldots,N\}$,  links $\mathcal{E}\subseteq \mathcal{V}\times \mathcal{V}$, and the weighted adjacency matrix $\mathcal{A}=[a_{i,j}]$,
where $a_{i,i}> 0,a_{i,j}\geq 0,\sum_{j\in \mathcal{V}}a_{i,j}=1$. If $a_{i,j}>0,j\neq i$, there is a link $(j,i)\in \mathcal{E}$, through which node $i$ can directly receive  messages from node $j$. { In this case, node $j$ is called a (in-)neighbor of node $i$ and node $i$ is called a out-neighbor of node $j$. The (in-)neighbor set of node $i$, including itself,  is  denoted by $\mathcal{N}_{i}$.
	The graph
	$\mathcal{G}$ is called strongly connected if for any two nodes $i_{1},i_{l}$, there exists a directed path from $i_{l}$ to  $i_{1}$ : $(i_{l},i_{l-1}),\ldots,(i_{3},i_{2}),(i_{2},i_{1})$. }
Let $\{\tilde x_{k,j}$, $\tilde P_{k,j}\}$ be the pair that node $j$ communicates to its out-neighbor nodes  at time $k$, where $\tilde x_{k,j}\in\mathbb{R}^n$ and $\tilde P_{k,j}\in\mathbb{R}^{n\times n}$.
Due to  channel noise, the pair $\{\hat{\tilde x}_{k,i,j},\bar{\tilde P}_{k,i,j}\}$ received by node $i$ from node  $j$  is
\begin{align}\label{eq_corrupt1}
\begin{split}
\hat{\tilde x}_{k,i,j}&=\tilde x_{k,j}+\varepsilon_{k,i,j},j\in\mathcal{N}_{i}\\
\bar{\tilde P}_{k,i,j}&=\tilde P_{k,j}+\mathcal{D}_{k,i,j},j\in\mathcal{N}_{i},
\end{split}
\end{align}
where $\varepsilon_{k,i,j}\in \mathbb{R}^n$  and $\mathcal{D}_{k,i,j}\in\mathbb{R}^{n\times n} $ are the channel noise processes. { If $\tilde P_{k,j}$ is symmetric, $\mathcal{D}_{k,i,j}$ is reasonably assumed to be symmetric. Because  it is sufficient to transmit the  upper triangular part  of the symmetric matrix $\tilde P_{k,j}$. In Lemma \ref{lem_consistent}, we will show that  the transmitted matrix $\tilde P_{k,j}$ is indeed symmetric.}

Let $(\Omega,\mathcal{F},P)$ be  the basic probability space, and $\mathcal{F}_k$ be a filtration of the $\sigma$-algebra $\mathcal{F}$.  A discrete-time  sequence $\{\xi_k\}$ is said to be adapted to the family of $\sigma$-algebras $\{\mathcal{F}_k\}$ if $\xi_k$ is measurable to $\mathcal{F}_k$. We refer the reader 
 \cite{chow2012probability} for details.
We require the following assumption.
\begin{assumption}\label{ass_noise} 
	The following conditions  on noise and initial estimates hold.
\begin{enumerate} 
	\item The initial state $x_0$, its estimates $\hat x_{0,i}$, and the noise  $\epsilon_k$, $w_{k}$, $\gamma_{k+1,i}$,  $v_{k+1,i}$, $\varepsilon_{k+1,i,j}$, $\mathcal{D}_{k+1,i,j}$ are independent both in time and space, for all $i,j\in\mathcal{V}, k=0,1,\dots$. 
	\item There exist known matrices $Q_{k},R_{k+1,i},P_0$ and scalars $\mu_{k},\varphi_{k+1,i}$,  such that for all $i\in \mathcal{V}$, and $k=0,1,\dots,$
	\begin{align*}
	&E\{w_{k}w_{k}^T\}\leq Q_{k},\quad	\inf_{k\in\mathbb{N}} Q_{k}>0,\quad E\{x_0x_0^T\}\leq P_0\\[0.3em]
	&E\{\epsilon_k^2\}\leq \mu_{k},\quad \quad\quad\Cov\{\gamma_{k+1,i}\}\leq \varphi_{k+1,i}\\[0.3em]
	&E\{v_{k+1,i}v_{k+1,i}^T\}\leq R_{k+1,i}\\[0.3em]
	& \sup_{k\in\mathbb{N}} \left[\tau_{k+1,i}^2 C_{k+1,i}^TR_{k+1,i}^{-1}C_{k+1,i}\right]<\infty\\[0.3em]
	&E\{(\hat x_{0,i}-x_0)(\hat x_{0,i}-x_0)^T\}\leq P_{0,i}.
	\end{align*}
		\item  	There exist positive semi-definite matrices $\varUpsilon_{i,j}$ and $\mathcal{D}_{i,j}$ such that  for all $i\in \mathcal{V}$, $j\in\mathcal{N}_i,$ and $k=1,2,\dots,$ 
		\begin{align*}
&\sup\{\varepsilon_{k,i,j}\varepsilon_{k,i,j}^T\}\leq \varUpsilon_{i,j},-\mathcal{D}_{i,j}\leq\mathcal{D}_{k,i,j}\leq \mathcal{D}_{i,j},
		\end{align*}
		where the channel noise   $\varepsilon_{k,i,j}$ and $\mathcal{D}_{k,i,j}$ are in \eqref{eq_corrupt1}.
\end{enumerate}
\end{assumption}

Note that the exact covariance information of the stochastic uncertainties is not required. Bounds and statistics are  known only to individual sensors. Thus, the conditions in 2)  of Assumption \ref{ass_noise} are milder than \cite{liu2016minimum,li2017distributed,wen2016recursive}, where each sensor was assumed to have  full knowledge on the statistics of the  system.

Let $\hat x_{k}$ be the estimate of the  system state $x_{k}$.
 Due to  unknown correlation between sensor estimates,  the MSE of each sensor can not be obtained in a distributed manner \cite{Battistelli2014Kullback, Wang2017On,Yang2014Stochastic,yang2017stochastic}.  We introduce the following definitions to consider the bounds of MSE.
\begin{definition} \cite{Niehsen2002Information} (Consistency)\label{def_consistency}
The pair $\{\hat x_{k},P_{k}\}$ is consistent  if there is a deterministic sequence $\{P_{k}\}$ such that $	E\{(\hat x_{k}-x_{k})(\hat x_{k}-x_{k})^T\}\leq P_{k}.$	
\end{definition} 
\begin{definition}(Conditional consistency)\label{def_cond_consistency}
The pair $\{\hat x_{k},P_{k}\}$ is conditionally consistent  if there is a sequence $\{P_{k}\}$, such that 
$	E\{(\hat x_{k}-x_{k})(\hat x_{k}-x_{k})^T|\mathcal{K}_{k}\}\leq P_{k}$, where $\mathcal{K}_{k}$ is a $\sigma$-algebra and $P_{k}$ is measurable to $\mathcal{K}_{k}$.
\end{definition}
{ Note that the consistency defined  above is different from the one in parameter identification, which instead is on  asymptotic convergence  to the true parameters.}
{ 	The consistency definition we use in this paper \cite{he2019distributed,He2017Consistent,Battistelli2014Kullback,Wang2017On} provides two benefits. First, the estimation error of each sensor can be evaluated online by utilizing some probability inequalities \cite{poznyak2009advanced}. Second, a CI-based fusion method can be utilized in the filter design. We introduce conditional consistency in Definition~\ref{def_cond_consistency} to cope with channel noise.
	The idea is to use that   the pair $\{\hat x_{k},E\{P_{k}\}\}$ is  consistent, if $\{\hat x_{k},P_{k}\}$ is conditionally consistent.  }

\subsection{Problem}
In this paper, we consider a three-step distributed filtering structure. Each  sensor $ i\in \mathcal{V}$, executes a  state prediction, measurement update and local fusion at each time:
\begin{equation}\label{filter_stru}
\begin{split}
\bar x_{k,i}&=A_{k-1}\hat x_{k-1,i}\\[0.3em]
\tilde x_{k,i}&=\bar x_{k,i}+K_{k,i}(y_{k,i}-\tau_{k,i}C_{k,i}\bar x_{k,i})\\[0.3em]
\hat x_{k,i}&=\sum_{j\in \mathcal{N}_{i}}W_{k,i,j}\hat{\tilde x}_{k,i,j},
\end{split}
\end{equation}
where $\bar x_{k,i}$,  $\tilde x_{k,i}$, and $\hat x_{k,i}$ are the state estimates in  prediction,   update, and  fusion of sensor $i$ at time $k$, respectively. Moreover, $\hat{\tilde x}_{k,i,j}$ given in \eqref{eq_corrupt1} is the noisy estimate received by sensor $i$ from sensor $j$.
Besides, $K_{k,i}$ is the filtering gain parameter matrix, $W_{k,i,j}$ is the local fusion parameter  matrix. Both $K_{k,i}$ and $W_{k,i,j}$  remain to be designed.

	Different from the existing results \cite{Cattivelli2008Diffusion,Wang2017On,Khan2014Collaborative,liu2016minimum}, measurements and measurement matrices are not transmitted in our setting. The advantages of this protocol lie in several aspects including privacy, security   and energy saving.

In  this paper, we  consider three essential subproblems:
{ 
	
	(\textbf{a}) How to design the parameters $K_{k,i}$ and $W_{k,i,j}$ in the distributed filter (\ref{filter_stru}),  such that  the filter is conditionally consistent? (Lemmas \ref{lem_consistent} and \ref{lem_K})

	(\textbf{b}) Which conditions on system structure  and noise statistics    enable the mean square estimation error   to be  bounded? (Theorem \ref{theorem_boundedness})
	
	(\textbf{c}) How to improve the performance of the filter (\ref{filter_stru}) when past estimates are available? (Algorithm \ref{alg:B}, Proposition \ref{lem_com}, and Theorem~~\ref{theorem_boundedness2})
	}

\section{Distributed Robust Kalman Filter Design}\label{sec:filter}
In this section,  we first provide a distributed design  of the filter gain $K_{k,i}\text{ and the fusion weight }W_{k,i,j}$ of the filter  (\ref{filter_stru}). 
Then we present our proposed distributed robust Kalman filter (DRKF) algorithm. Finally, it is shown that the algorithm gives bounded MSE.

\begin{lemma}\label{prop_1}
	Under Assumption \ref{ass_noise}, it holds that $E\{x_{k}x_{k}^T\}\leq \varPi_{k}, \forall k\in\mathbb{N}$,
where $\varPi_{k}$ is recursively calculated through $\varPi_{k+1}=A_{k}\varPi_{k}A_{k}^T+\mu_{k} F_{k}\varPi_{k}F_{k}^T+ Q_{k},$
with $\varPi_{0}=P_0$, in which $P_0$, $\mu_{k}$, and $Q_{k}$  are in Assumption \ref{ass_noise}.
\end{lemma}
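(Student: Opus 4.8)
The plan is to argue by induction on $k$, driven by the state recursion \eqref{system_all}. The base case is immediate, since Assumption \ref{ass_noise} gives $E\{x_0x_0^T\}\leq P_0=\varPi_0$. For the inductive step I would assume $E\{x_kx_k^T\}\leq\varPi_k$ and compute $E\{x_{k+1}x_{k+1}^T\}$ directly, substituting $x_{k+1}=(A_k+F_k\epsilon_k)x_k+w_k$ and expanding the outer product before taking expectations.

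The heart of the argument is to evaluate that expectation and discard the vanishing terms using the independence and zero-mean structure in Assumption \ref{ass_noise}. First, $w_k$ is independent of $x_k$ and $\epsilon_k$ and has zero mean, so the two cross terms coupling $w_k$ with $(A_k+F_k\epsilon_k)x_k$ both vanish. Next, since $\epsilon_k$ is a scalar independent of $x_k$ with $E\{\epsilon_k\}=0$, the two terms linear in $\epsilon_k$ vanish as well, because $E\{\epsilon_k x_kx_k^T\}=E\{\epsilon_k\}E\{x_kx_k^T\}=0$. What survives is
\begin{equation*}
E\{x_{k+1}x_{k+1}^T\}=A_kE\{x_kx_k^T\}A_k^T+E\{\epsilon_k^2\}\,F_kE\{x_kx_k^T\}F_k^T+E\{w_kw_k^T\},
\end{equation*}
where in the middle term I again used that $\epsilon_k^2$ is independent of $x_k$.

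It then remains to bound each of the three surviving terms by its counterpart in the recursion for $\varPi_{k+1}$. The first and third are direct: the inductive hypothesis gives $A_kE\{x_kx_k^T\}A_k^T\leq A_k\varPi_kA_k^T$ by congruence monotonicity (if $X\leq Y$ then $MXM^T\leq MYM^T$ for conformable $M$), and $E\{w_kw_k^T\}\leq Q_k$ is assumed. The one step requiring care is the multiplicative term, where I must combine the scalar bound $E\{\epsilon_k^2\}\leq\mu_k$ with a matrix bound simultaneously. Writing $M:=F_kE\{x_kx_k^T\}F_k^T$ and $N:=F_k\varPi_kF_k^T$, congruence monotonicity gives $0\leq M\leq N$, and then
\begin{equation*}
\mu_kN-E\{\epsilon_k^2\}M=\mu_k(N-M)+(\mu_k-E\{\epsilon_k^2\})M\geq0,
\end{equation*}
since both summands are positive semidefinite ($N-M\geq0$, $M\geq0$, $\mu_k\geq0$, and $\mu_k-E\{\epsilon_k^2\}\geq0$). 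Hence $E\{\epsilon_k^2\}F_kE\{x_kx_k^T\}F_k^T\leq\mu_kF_k\varPi_kF_k^T$, and summing the three bounds yields $E\{x_{k+1}x_{k+1}^T\}\leq A_k\varPi_kA_k^T+\mu_kF_k\varPi_kF_k^T+Q_k=\varPi_{k+1}$, closing the induction. I expect the only genuinely nonroutine point to be this last decomposition, namely propagating a scalar second-moment bound and a matrix second-moment bound through a single product while preserving the Loewner ordering; the rest is bookkeeping of independence to kill cross terms.
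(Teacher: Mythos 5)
Your proof is correct and follows essentially the same route as the paper: induction on $k$, expansion of $E\{x_{k+1}x_{k+1}^T\}$ using the recursion \eqref{system_all}, elimination of cross terms via the independence and zero-mean conditions in Assumption \ref{ass_noise}, and termwise Loewner-order bounds. The only difference is that you spell out explicitly (via the decomposition $\mu_k(N-M)+(\mu_k-E\{\epsilon_k^2\})M\geq 0$) why the scalar bound $E\{\epsilon_k^2\}\leq\mu_k$ and the matrix bound can be combined, a step the paper's proof passes over silently.
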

\begin{proof}
	See  Appendix \ref{app_lem_moment}.
\end{proof}

{Lemma \ref{prop_1} provides an upper bound of  the mean square of the system state $x(t)$, which is accessible to each sensor based on its system knowledge and 
	useful in the algorithm design and analysis as follows. Similar approaches are found in \cite{Tugnait1981Stability,yang2002robust}.}
By employing the CI-method \cite{Niehsen2002Information}, the following lemma provides a   choice for the fusion weight $W_{k,i,j}$ giving conditional consistency.
\begin{lemma}\label{lem_consistent}  
	Consider  system \eqref{system_all}--\eqref{system_all2} satisfying Assumption~\ref{ass_noise}. For the filter  (\ref{filter_stru}) with $k\geq 1$ and $i\in\mathcal{V}$, 
	if $K_{k,i}$ is adapted to the channel noise $\sigma$-algebra $\mathcal{W}_{k}=\sigma(\mathcal{D}_{t,i,j},1\leq t\leq k,i,j,\in\mathcal{V})$, and
		\begin{align}\label{eq_W}
	W_{k,i,j}=a_{i,j}P_{k,i}(\bar{\tilde P}_{k,i,j}+\mathcal{D}_{i,j}+\varUpsilon_{i,j})^{-1},
	\end{align}
then the pairs $\{\bar x_{k,i},\bar P_{k,i}\}, \{\tilde x_{k,i}, \tilde P_{k,i}\}, \{\hat x_{k,i},P_{k,i}\}$ are all conditionally consistent given $\mathcal{W}_{k}$, 
 where 
\begin{align*}
\bar P_{k,i}=&A_{k-1}P_{k-1,i}A_{k-1}^T+\mu_{k-1} F_{k-1}\varPi_{k-1}F_{k-1}^T+ Q_{k-1}\\[0.3em]
\tilde P_{k,i}=& (I-\tau_{k,i}K_{k,i}C_{k,i})\bar P_{k,i}(I-\tau_{k,i}K_{k,i}C_{k,i})^T\nonumber\\[0.3em]
&+K_{k,i}\big(R_{k,i}+\varphi_{k,i} C_{k,i}\varPi_{k}C_{k,i}^T\big)K_{k,i}^T\\[0.3em]
\bar{\tilde P}_{k,i,j}=&\tilde P_{k,j}+\mathcal{D}_{k,i,j},j\in\mathcal{N}_{i}\\[0.3em]
	P_{k,i}=&(\sum_{j\in \mathcal{N}_{i}}a_{i,j}(\bar{\tilde P}_{k,i,j}+\mathcal{D}_{i,j}+\varUpsilon_{i,j})^{-1})^{-1}.
\end{align*}
\end{lemma}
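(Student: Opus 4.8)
The plan is to argue by induction on the time index $k$, propagating conditional consistency given $\mathcal{W}_k$ through the three stages of \eqref{filter_stru}. The base case is the hypothesis $E\{(\hat x_{0,i}-x_0)(\hat x_{0,i}-x_0)^T\}\le P_{0,i}$ in condition 2) of Assumption~\ref{ass_noise}, which gives (unconditional, hence conditional) consistency of $\{\hat x_{0,i},P_{0,i}\}$. The inductive hypothesis is that $\{\hat x_{k-1,i},P_{k-1,i}\}$ is conditionally consistent given $\mathcal{W}_{k-1}$. Throughout, the decisive structural fact I would exploit is that $K_{k,i}$ (by assumption) and hence $\bar P_{k,i},\tilde P_{k,i},P_{k,i}$ and the weights $W_{k,i,j}$ are all $\mathcal{W}_k$-measurable, whereas the primitive noises $\epsilon_{k-1},w_{k-1},\gamma_{k,i},v_{k,i},\varepsilon_{k,i,j}$ are independent of $\mathcal{W}_k$, since $\mathcal{W}_k$ records only the $\mathcal{D}$-noise. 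Consequently, under $E\{\cdot\mid\mathcal{W}_k\}$ every gain and covariance factors out as a constant, while the noise terms are averaged using their unconditional statistics from Assumption~\ref{ass_noise}.

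For the prediction step I would substitute \eqref{system_all} to write $\bar x_{k,i}-x_k=A_{k-1}(\hat x_{k-1,i}-x_{k-1})-F_{k-1}\epsilon_{k-1}x_{k-1}-w_{k-1}$. Taking $E\{\cdot\mid\mathcal{W}_{k-1}\}$ of the outer product, the cross terms vanish because $\epsilon_{k-1}$ and $w_{k-1}$ are zero-mean and independent of the remaining factors; the leading term is bounded by $A_{k-1}P_{k-1,i}A_{k-1}^T$ via the inductive hypothesis, and the other two give $\mu_{k-1}F_{k-1}\varPi_{k-1}F_{k-1}^T$ and $Q_{k-1}$ after invoking $E\{\epsilon_{k-1}^2\}\le\mu_{k-1}$, $E\{x_{k-1}x_{k-1}^T\}\le\varPi_{k-1}$ (Lemma~\ref{prop_1}), and $E\{w_{k-1}w_{k-1}^T\}\le Q_{k-1}$; this reproduces $\bar P_{k,i}$. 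Since this error depends only on noise up to time $k-1$, which is independent of $\mathcal{D}_{k,\cdot,\cdot}$, the same bound holds given $\mathcal{W}_k$. For the update step I would insert \eqref{system_all2} and regroup to obtain $\tilde x_{k,i}-x_k=(I-\tau_{k,i}K_{k,i}C_{k,i})(\bar x_{k,i}-x_k)+K_{k,i}(\gamma_{k,i}-\tau_{k,i})C_{k,i}x_k+K_{k,i}v_{k,i}$. Conditioning on $\mathcal{W}_k$, the cross terms again vanish by independence and zero mean, and the three diagonal terms yield exactly $\tilde P_{k,i}$ upon using the prediction bound $\bar P_{k,i}$, $\Cov\{\gamma_{k,i}\}\le\varphi_{k,i}$, and $E\{v_{k,i}v_{k,i}^T\}\le R_{k,i}$.

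The fusion step is the covariance-intersection core. First I would record that the definition of $P_{k,i}$ makes the weights sum to the identity, $\sum_{j\in\mathcal{N}_i}W_{k,i,j}=P_{k,i}\sum_{j}a_{i,j}(\bar{\tilde P}_{k,i,j}+\mathcal{D}_{i,j}+\varUpsilon_{i,j})^{-1}=I$, so that $\hat x_{k,i}-x_k=\sum_{j}W_{k,i,j}(\hat{\tilde x}_{k,i,j}-x_k)$. Next I would verify that each received pair is conditionally consistent with surrogate bound $\Theta_{k,i,j}:=\bar{\tilde P}_{k,i,j}+\mathcal{D}_{i,j}+\varUpsilon_{i,j}$: writing $\hat{\tilde x}_{k,i,j}-x_k=(\tilde x_{k,j}-x_k)+\varepsilon_{k,i,j}$, the cross term drops since $\varepsilon_{k,i,j}$ is zero-mean and independent of $\mathcal{W}_k$, and then $E\{(\hat{\tilde x}_{k,i,j}-x_k)(\cdot)^T\mid\mathcal{W}_k\}\le\tilde P_{k,j}+\varUpsilon_{i,j}\le\Theta_{k,i,j}$, where the last inequality uses $\mathcal{D}_{k,i,j}+\mathcal{D}_{i,j}\ge0$ from condition 3) of Assumption~\ref{ass_noise}. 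With $W_{k,i,j}=a_{i,j}P_{k,i}\Theta_{k,i,j}^{-1}$, $\sum_j a_{i,j}=1$, and $P_{k,i}=(\sum_j a_{i,j}\Theta_{k,i,j}^{-1})^{-1}$, the conditional CI consistency inequality of \cite{Niehsen2002Information}---valid for arbitrary, possibly unknown cross-correlations among the errors $\hat{\tilde x}_{k,i,j}-x_k$---yields $E\{(\hat x_{k,i}-x_k)(\cdot)^T\mid\mathcal{W}_k\}\le P_{k,i}$, closing the induction. Finally, since $\bar P_{k,i},\tilde P_{k,i}$ and $P_{k,i}$ are manifestly symmetric whenever $P_{k-1,i}$ is, a side induction confirms that the transmitted $\tilde P_{k,j}$ is symmetric, as claimed after \eqref{eq_corrupt1}.

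I expect the main obstacle to be the fusion step: reducing it to the covariance-intersection inequality requires (i) the bookkeeping that lets the $\mathcal{W}_k$-measurable gains and weights pass through the conditional expectation as constants, and (ii) absorbing the two channel-noise contributions $\varepsilon_{k,i,j}$ and $\mathcal{D}_{k,i,j}$ into the single surrogate covariance $\Theta_{k,i,j}$, the latter crucially relying on the one-sided bound $\mathcal{D}_{k,i,j}\ge-\mathcal{D}_{i,j}$ so that the unknown realization of $\mathcal{D}_{k,i,j}$ never degrades the guaranteed bound. The prediction and update steps, by contrast, are routine second-moment computations once the independence and zero-mean structure of condition 1) of Assumption~\ref{ass_noise} is used to kill the cross terms.
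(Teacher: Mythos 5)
Your proposal is correct and follows essentially the same route as the paper's proof: an induction that propagates conditional consistency given $\mathcal{W}_k$ through prediction and update (killing cross terms via the zero-mean, independence structure of Assumption \ref{ass_noise} and using Lemma \ref{prop_1} for $\varPi_{k-1}$), absorbs both channel-noise contributions into the surrogate bound $\bar{\tilde P}_{k,i,j}+\mathcal{D}_{i,j}+\varUpsilon_{i,j}$ via $\mathcal{D}_{k,i,j}+\mathcal{D}_{i,j}\geq 0$, and closes with the covariance-intersection consistency result of \cite{Niehsen2002Information}. The only additions beyond the paper's argument are minor bookkeeping items it leaves implicit (the weights summing to the identity and the symmetry of the transmitted $\tilde P_{k,j}$).
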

\begin{proof}
	See Appendix \ref{app_lem_consis}.
\end{proof}
{ Note that the design of the fusion weight $W_{k,i,j}$ in Lemma~\ref{lem_consistent} is fully distributed, and it	
	depends on the communication noise bounds, i.e., $\varUpsilon_{i,j}$ and $\mathcal{D}_{k,i,j}$, which is an extension to \cite{ji2017distributed,liu2016minimum,li2017distributed,wen2016recursive,He2017Consistent}.} In  the following lemma, we  design the filter gain $K_{k,i}$ of filter \eqref{filter_stru} such that the bound of the conditional MSE, i.e., $\tilde P_{k,i}$, is minimized at each measurement update.
\begin{lemma}\label{lem_K}
	The optimal solution $K_{k,i}^*:=\arg\min\limits_{K_{k,i}}\Tr\{\tilde P_{k,i}\}$ is given by 
\begin{align*}
K_{k,i}^*=\tau_{k,i}\bar P_{k,i}C_{k,i}^T\Xi_{k,i}^{-1},
\end{align*}
where $\Xi_{k,i}=\tau_{k,i}^2C_{k,i}\bar P_{k,i}C_{k,i}^T+R_{k,i}+\varphi_{k,i}C_{k,i}\varPi_{k}C_{k,i}^T.$ Furthermore, $K_{k,i}^*$ is adapted to the channel noise $\sigma$-algebra $\mathcal{W}_{k}$ in Lemma \ref{lem_consistent}.
\end{lemma}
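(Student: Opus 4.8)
The plan is to treat $\Tr\{\tilde P_{k,i}\}$ as a quadratic matrix functional of the free gain $K_{k,i}$ and minimize it by completing the square, and then to verify that the resulting minimizer inherits measurability with respect to $\mathcal{W}_k$ from the recursions established in Lemma \ref{lem_consistent}. First I would expand the defining expression for $\tilde P_{k,i}$. Writing $\Xi_{k,i}=\tau_{k,i}^2C_{k,i}\bar P_{k,i}C_{k,i}^T+R_{k,i}+\varphi_{k,i}C_{k,i}\varPi_{k}C_{k,i}^T$ and multiplying out the two terms gives
\[
\tilde P_{k,i}=\bar P_{k,i}-\tau_{k,i}K_{k,i}C_{k,i}\bar P_{k,i}-\tau_{k,i}\bar P_{k,i}C_{k,i}^TK_{k,i}^T+K_{k,i}\Xi_{k,i}K_{k,i}^T.
\]
Taking the trace and using $\Tr(K_{k,i}C_{k,i}\bar P_{k,i})=\Tr(\bar P_{k,i}C_{k,i}^TK_{k,i}^T)$ together with the symmetry of $\bar P_{k,i}$, I obtain $\Tr\{\tilde P_{k,i}\}=\Tr(\bar P_{k,i})-2\tau_{k,i}\Tr(K_{k,i}C_{k,i}\bar P_{k,i})+\Tr(K_{k,i}\Xi_{k,i}K_{k,i}^T)$, a strictly convex quadratic in $K_{k,i}$.

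Next I would record that $\Xi_{k,i}>0$. Indeed, the finiteness of $\tau_{k,i}^2C_{k,i}^TR_{k,i}^{-1}C_{k,i}$ in Assumption \ref{ass_noise} forces $R_{k,i}>0$, and since the remaining two summands of $\Xi_{k,i}$ are positive semidefinite, $\Xi_{k,i}\geq R_{k,i}>0$ is invertible. With $K^{*}:=\tau_{k,i}\bar P_{k,i}C_{k,i}^T\Xi_{k,i}^{-1}$, completing the square yields
\[
\Tr\{\tilde P_{k,i}\}=\Tr(\bar P_{k,i})-\tau_{k,i}^2\Tr\big(\bar P_{k,i}C_{k,i}^T\Xi_{k,i}^{-1}C_{k,i}\bar P_{k,i}\big)+\Tr\big((K_{k,i}-K^{*})\Xi_{k,i}(K_{k,i}-K^{*})^T\big).
\]
Because $\Xi_{k,i}>0$, the final trace is nonnegative and vanishes precisely when $K_{k,i}=K^{*}$, which identifies $K^{*}=K_{k,i}^{*}$ as the unique minimizer. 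Equivalently, setting the matrix gradient $-2\tau_{k,i}\bar P_{k,i}C_{k,i}^T+2K_{k,i}\Xi_{k,i}$ to zero gives the same $K^{*}$, the positive-definite Hessian confirming a minimum rather than a saddle.

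Finally, for the adaptedness claim I would argue by induction on $k$ using the recursions of Lemma \ref{lem_consistent}. The quantities $\tau_{k,i}$, $C_{k,i}$, $R_{k,i}$, $\varphi_{k,i}$, and $\varPi_{k}$ are deterministic, so it suffices to show that $\bar P_{k,i}$ is $\mathcal{W}_{k}$-measurable; then $\Xi_{k,i}$ and hence $K_{k,i}^{*}$ are $\mathcal{W}_{k}$-measurable as measurable functions of $\bar P_{k,i}$. From $\bar P_{k,i}=A_{k-1}P_{k-1,i}A_{k-1}^T+\mu_{k-1}F_{k-1}\varPi_{k-1}F_{k-1}^T+Q_{k-1}$, the only random dependence enters through $P_{k-1,i}$, which by its CI formula is a measurable function of the received $\bar{\tilde P}_{k-1,i,j}=\tilde P_{k-1,j}+\mathcal{D}_{k-1,i,j}$. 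Tracing $\tilde P_{k-1,j}$ back through the update and prediction recursions shows it depends only on the gains $K_{t,j}$ and the channel-noise terms $\mathcal{D}_{t,i,j}$ with $t\leq k-1$, all of which are $\mathcal{W}_{k-1}\subseteq\mathcal{W}_{k}$-measurable by the induction hypothesis; this closes the induction.

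The completion of squares is routine; the step requiring care is the measurability argument, since it must be threaded consistently with the constraint imposed in Lemma \ref{lem_consistent} that $K_{k,i}$ be $\mathcal{W}_{k}$-adapted. The key observation that makes everything fit together is that the \emph{unconstrained} minimizer $K^{*}$ turns out to be automatically $\mathcal{W}_{k}$-measurable, so the constrained and unconstrained optima coincide and there is no loss of optimality in imposing the consistency-preserving adaptedness restriction.
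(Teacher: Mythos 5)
Your proposal is correct and takes essentially the same route as the paper: the paper completes the square at the matrix level, writing $\tilde P_{k,i}=(K_{k,i}-K_{k,i}^*)\Xi_{k,i}(K_{k,i}-K_{k,i}^*)^T+(I-\tau_{k,i}K_{k,i}^*C_{k,i})\bar P_{k,i}$, which is exactly the identity underlying your trace-level completion of the square. Your explicit verification that $\Xi_{k,i}>0$ and your induction establishing $\mathcal{W}_k$-measurability of $\bar P_{k,i}$ merely spell out details that the paper compresses into the single remark that $K_{k,i}^*$ is a measurable function of $\bar P_{k,i}$, which is adapted to $\mathcal{W}_k$.
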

\begin{proof}
		See Appendix \ref{app_lem_K}.
\end{proof}

{The designed filter gain in Lemma \ref{lem_K} inherits the gain of the optimal centralized robust filters in \cite{Tugnait1981Stability,yang2002robust}, but here it is stochastic and adapted to  the channel noise $\sigma$-algebra $\mathcal{W}_{k}=\sigma(\mathcal{D}_{t,i,j},1\leq t\leq k,i,j,\in\mathcal{V})$.}
With the filter  parameters $K_{k,i}$ and $W_{k,i,j}$ given in Lemmas \ref{lem_consistent} and \ref{lem_K}, respectively,  we obtain the DRKF given in Algorithm  \ref{alg:A}.   Different from \cite{Yang2014Stochastic,yang2017stochastic}, the implementation of this algorithm only depends on the local measurement information 
$\{y_{k,i},C_{k,i},R_{k,i},\varphi_{k,i},\tau_{k,i}\}$
and the estimate pairs $\{\hat{\tilde x}_{k,i,j}$, $\bar{\tilde P}_{k,i,j},j\in\mathcal{N}_i\}$ from neighbors. Thus, it obeys a fully distributed design and implementation. { 
	For  sensor $i$, the computational complexity of Algorithm  \ref{alg:A} at each time is $O(\max\{n^3d_i,m_i^3\})$, where $d_i$ is the cardinality of the set $\mathcal{N}_i$, and $n$ and $m_i$ are the dimensions of the system state and sensor measurement, respectively. The overall computational complexity  for all sensors is consequently $O(\max\{Nn^3d_i,Nm_i^3\})$. Thus, the algorithm is scalable to large networks. The performance of the algorithm is degraded if the upper bounds in Assumption \ref{ass_noise} are not  tight. 
}{  In systems with measurement outliers \cite{stojanovic2016joint}, 
Algorithm~\ref{alg:A} can be adapted to estimate the state  by developing appropriate scheme for discarding the measurement outliers. }

\begin{algorithm}
\caption{Distributed robust Kalman filter (DRKF):}
\label{alg:A}
\begin{algorithmic}
	{ 	\STATE {\textbf{Initial setting: }} \\[0.3em]
		$\{\hat x_{0,i},P_{0,i},\varPi_{0},\mathcal{D}_{i,j},\varUpsilon_{i,j},j\in\mathcal{N}_i,i\in\mathcal{V}\}$.}\\[0.3em]
\STATE {\textbf{Prediction:} For each sensor $i$:}\\[0.3em]
$\bar x_{k,i}=A_{k-1}\hat x_{k-1,i},$\\ [0.3em] 
		$\bar P_{k,i}=A_{k-1}P_{k-1,i}A_{k-1}^T+\mu_{k-1} F_{k-1}\varPi_{k-1}F_{k-1}^T+ Q_{k-1},$\\[0.3em]
		$	\varPi_{k}=A_{k-1}\varPi_{k-1}A_{k-1}^T+\mu_{k-1} F_{k-1}\varPi_{k-1}F_{k-1}^T+ Q_{k-1}.$\\[0.3em]
\STATE {\textbf{Update:} For each sensor $i$:}\\[0.3em]
$\tilde x_{k,i}=\bar x_{k,i}+K_{k,i}(y_{k,i}-\tau_{k,i}C_{k,i}\bar x_{k,i})$,\\        
\begin{flushleft}
$K_{k}=\tau_{k,i}\bar P_{k,i}C_{k,i}^T\big(\tau_{k,i}^2C_{k,i}\bar P_{k,i}C_{k,i}^T+R_{k,i}+\varphi_{k,i}C_{k,i}\varPi_{k}C_{k,i}^T\big)^{-1}$
\end{flushleft}
		$\tilde P_{k,i}=(I-\tau_{k,i}K_{k,i}C_{k,i})\bar P_{k,i}$.\\[0.3em]
\STATE {\textbf{Fusion:} For each sensor $i$:}\\[0.3em]
$\hat x_{k,i}=P_{k,i}\sum_{j\in \mathcal{N}_{i}}a_{i,j}(\bar{\tilde P}_{k,i,j}+\mathcal{D}_{i,j}+\varUpsilon_{i,j})^{-1}\hat{\tilde x}_{k,i,j}$,\\	[0.3em]
		$P_{k,i}=(\sum_{j\in \mathcal{N}_{i}}a_{i,j}(\bar{\tilde P}_{k,i,j}+\mathcal{D}_{i,j}+\varUpsilon_{i,j})^{-1})^{-1},$\\	[0.3em]
		where $\hat{\tilde x}_{k,i,j}$ and $\bar{\tilde P}_{k,i,j}$ are given in \eqref{eq_corrupt1}.
\end{algorithmic}
\end{algorithm}

Next  we   find mild conditions to guarantee  boundedness of the MSE for Algorithm \ref{alg:A}. For $j> k$, we denote  the  transition matrix by $\Phi_{j,k}= A_{j-1}\Phi_{j-1,k}$, where $\Phi_{k,k}=I_{n}$. We assume robust collective observability in the following.

\begin{assumption}\label{ass_observable}(Robust collective  observability)
	There exists an integer $\bar N>0$ and a constant $\alpha>0$ such that for $k\in\mathbb{N}$,
\begin{equation}\label{Observability_matrix2}
\sum_{i=1}^{N}\sum_{j=k}^{k+\bar N}\Phi^T_{j,k}\bar C_{j,i}^T\tilde R_{j,i}^{-1}\bar C_{j,i}\Phi_{j,k}\geq \alpha I_n,
\end{equation}	
where  
\begin{align*}
\bar C_{j,i}&= \tau_{j,i}C_{j,i},\quad  j\in\mathbb{N}, \quad i\in\mathcal{V}\\[0.3em]
\tilde R_{j,i}&= R_{j,i}+\varpi_{j}\varphi_{j,i}C_{j,i}C_{j,i}^T\\[0.3em]
\varpi_{j}&=\|P_{0}\|_2\prod_{i=0}^{j-1}\bar\alpha_{i} + \sum_{s=1}^{j}\left(\bar q_{s-1}\prod_{l=s}^{j}\bar\alpha_{l} \right)+ \bar q_{j}\\[0.3em]
\bar \alpha_{j}&=\|A_{j}\|_2^2+\mu_{j}\|F_{j}\|_2^2\\[0.3em]
\bar q_{j}&=\|Q_{j}\|_2.
\end{align*}
\end{assumption}
Assumption \ref{ass_observable} is based on the system structure and noise statistics. It can be regarded as a distributed version of  the observability condition with multiplicative noise in \cite{Tugnait1981Stability}.
The condition does not require that each sub-system is observable \cite{Cat2010Diffusion,stankovic2009consensus,Yang2014Stochastic}.
Moreover, if $\varphi_{k,i}\equiv 0$, $\forall k\in\mathbb{N},i\in\mathcal{V}$,  Assumption \ref{ass_observable} corresponds to the collective observability condition for time-varying stochastic systems in \cite{He2017Consistent}.

	A requirement on the multiplicative noise $\epsilon_k$ is needed.  Recall that $\mu_k$ is the bound of the variance of $\epsilon_k$.
 Denote
 the time sequence  of  non-zero multiplicative noise by
 \begin{align}\label{noise_set}
 \mathbb{K}_{T}=\{k_{t}=\min\limits_{ \mu_{k}>0} k|k\geq k_{t-1},k,t\in\mathbb{N}\}.
 \end{align}
 \begin{assumption}\label{ass_structure} 
There exist positive scalars $\lambda_1$, $\lambda_2$, $M$ and $\varrho\in(0,1)$, such that 
 \begin{align}
 &\lambda_1 I_n\leq A_{k}A_{k}^T\leq \lambda_2 I_n, k\in\mathbb{N}\label{eq_A}\\[0.3em]
 &\prod_{t=s}^l\rho_{k_t}\leq M\varrho^{l-s}, 0\leq s\leq l< \infty\label{eq_exp}\\[0.3em]
 &\sup\limits_{t\in\mathbb{N}}\|\mu_{k_{t+1}}F_{k_{t+1}}\mathcal Q_{k_{t+1},k_{t}}F_{k_{t+1}}^T\|_2<\infty,\label{eq_Q_sup}
 \end{align}	
 where $k_t\in \mathbb{K}_{T}$ in \eqref{noise_set} and
 \begin{align*}
 &\rho_{k_t}=\frac{\mu_{k_{t+1}}}{\mu_{k_{t}}}\|F_{k_{t+1}}\Phi_{k_{t+1},k_{t}}F_{k_{t}}^{-1}\|_{2}^2+\mu_{k_{t+1}}\|F_{k_{t+1}}\Phi_{k_{t+1},k_{t}}\|_2^2\\[0.3em]
 &\mathcal Q_{k_{t+1},k_{t}}=\sum_{k=k_{t}}^{k_{t+1}}\Phi_{k_{t+1},k}Q_{k}\Phi_{k_{t+1},k}^T.
 \end{align*}
 \end{assumption}
		Compared to \cite{liu2016minimum,li2017distributed,wen2016recursive}, \eqref{eq_A} is a  milder condition as it permits the nominal system to be unstable. 	If   $\{k|\mu_{k}>0,k\in\mathbb{N}\}$ is finite or even empty, 
		\eqref{eq_exp} and \eqref{eq_Q_sup} can still be made satisfied by replacing the points $\mu_{k}=0$ with  sufficiently small positive $\bar\mu_{k}$. 	
For further analysis, we need  Lemmas \ref{lem_sufficient_obser}--\ref{lem_multi_bounds}.
\begin{lemma}\label{lem_sufficient_obser}
	If Assumption \ref{ass_observable} holds, then 
		\begin{equation}\label{Observability_matrix}
			\sum_{i=1}^{N}\sum_{j=k}^{k+\bar N}\Phi^T_{j,k}\bar C_{j,i}^T\bar R_{j,i}^{-1}\bar C_{j,i}\Phi_{j,k}\geq \alpha I_n,
			\end{equation}	
			where $ \bar R_{k,i}:= R_{k,i}+\varphi_{k,i}C_{k,i}\varPi_{k}C_{k,i}^T$.
\end{lemma}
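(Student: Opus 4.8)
The plan is to show that replacing the weighting $\tilde R_{j,i}^{-1}$ in \eqref{Observability_matrix2} by $\bar R_{j,i}^{-1}$ only enlarges each summand in the Loewner order, so that the lower bound $\alpha I_n$ is inherited. Every term on the left-hand side of \eqref{Observability_matrix} has the congruence form $\Phi_{j,k}^T\bar C_{j,i}^T(\cdot)^{-1}\bar C_{j,i}\Phi_{j,k}=(\bar C_{j,i}\Phi_{j,k})^T(\cdot)^{-1}(\bar C_{j,i}\Phi_{j,k})$, so it suffices to prove the pointwise inequality $\bar R_{j,i}^{-1}\geq\tilde R_{j,i}^{-1}$ for every $i\in\mathcal{V}$ and $k\leq j\leq k+\bar N$: congruence by $\bar C_{j,i}\Phi_{j,k}$ preserves it, and summing over $i$ and $j$ then turns \eqref{Observability_matrix2} into \eqref{Observability_matrix}.

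By operator monotonicity of matrix inversion on the positive-definite cone, $\bar R_{j,i}^{-1}\geq\tilde R_{j,i}^{-1}$ is equivalent to $0<\bar R_{j,i}\leq\tilde R_{j,i}$. Positive definiteness of both matrices follows from $R_{j,i}>0$ (guaranteed by the invertibility of $R_{j,i}$ used in Assumption \ref{ass_noise}), since $\bar R_{j,i}=R_{j,i}+\varphi_{j,i}C_{j,i}\varPi_{j}C_{j,i}^T\geq R_{j,i}$ and likewise $\tilde R_{j,i}\geq R_{j,i}$. Subtracting $R_{j,i}$ and using $\varphi_{j,i}\geq 0$, the inequality $\bar R_{j,i}\leq\tilde R_{j,i}$ reduces to $C_{j,i}\varPi_{j}C_{j,i}^T\leq\varpi_{j}C_{j,i}C_{j,i}^T=C_{j,i}(\varpi_{j}I_n)C_{j,i}^T$, which is immediate by congruence once we know $\varPi_{j}\leq\varpi_{j}I_n$.

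The heart of the argument is therefore the scalar-versus-matrix bound $\varPi_{m}\leq\varpi_{m}I_n$ for every $m\in\mathbb{N}$, which I would prove by induction from the recursion of Lemma \ref{prop_1}. At $m=0$ we have $\varPi_{0}=P_0\leq\|P_0\|_2 I_n\leq\varpi_{0}I_n$, since $\varpi_{0}=\|P_0\|_2+\bar q_{0}$. For the inductive step, apply the Loewner bounds $A_{m}A_{m}^T\leq\|A_{m}\|_2^2 I_n$, $F_{m}F_{m}^T\leq\|F_{m}\|_2^2 I_n$ and $Q_{m}\leq\bar q_{m}I_n$ to $\varPi_{m+1}=A_{m}\varPi_{m}A_{m}^T+\mu_{m}F_{m}\varPi_{m}F_{m}^T+Q_{m}$, which under the inductive hypothesis gives $\varPi_{m+1}\leq(\bar\alpha_{m}\varpi_{m}+\bar q_{m})I_n$ with $\bar\alpha_{m}=\|A_{m}\|_2^2+\mu_{m}\|F_{m}\|_2^2$. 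Equivalently, the spectral norms obey the scalar linear recursion $\|\varPi_{m+1}\|_2\leq\bar\alpha_{m}\|\varPi_{m}\|_2+\bar q_{m}$ started from $\|\varPi_{0}\|_2=\|P_0\|_2$; unrolling it reproduces the closed-form quantity $\varpi_{m}$, so $\|\varPi_{m}\|_2\leq\varpi_{m}$ and hence $\varPi_{m}\leq\varpi_{m}I_n$.

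The main obstacle I anticipate is purely the bookkeeping in this last induction, namely checking that the explicit expression defining $\varpi_{m}$ dominates the unrolled solution $\|P_0\|_2\prod_{i=0}^{m-1}\bar\alpha_{i}+\sum_{r=0}^{m-1}\bar q_{r}\prod_{l=r+1}^{m-1}\bar\alpha_{l}$ of the scalar recursion term by term; the stated $\varpi_{m}$ carries additional nonnegative factors and a trailing $\bar q_{m}$, so it is an over-estimate and the one-step inequality $\varpi_{m+1}\geq\bar\alpha_{m}\varpi_{m}+\bar q_{m}$ needed to close the induction holds. Once $\varPi_{j}\leq\varpi_{j}I_n$ is established, the monotonicity and congruence steps of the first two paragraphs are routine, and \eqref{Observability_matrix} follows directly from \eqref{Observability_matrix2}.
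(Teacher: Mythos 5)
Your overall route is the same as the paper's: reduce \eqref{Observability_matrix} to the pointwise bound $\bar R_{j,i}\leq\tilde R_{j,i}$, which in turn follows from $\varPi_{j}\leq\varpi_{j}I_n$, obtained from the recursion of Lemma \ref{prop_1} by passing to spectral norms. Your first two paragraphs simply make explicit the congruence and inversion-antitonicity steps that the paper leaves implicit, and they are fine.

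The gap is in your last paragraph. You close the induction by asserting $\varpi_{m+1}\geq\bar\alpha_{m}\varpi_{m}+\bar q_{m}$, on the grounds that the stated $\varpi_{m}$ exceeds the unrolled solution because it ``carries additional nonnegative factors.'' Those extra factors are \emph{multiplicative} copies of $\bar\alpha_{m}=\|A_{m}\|_2^2+\mu_{m}\|F_{m}\|_2^2$, and nothing in the hypotheses of this lemma forces $\bar\alpha_{m}\geq 1$ (Assumption \ref{ass_structure} is not assumed here, and even it would not give that). Writing $u_m$ for the exact unrolled solution of the scalar recursion ($u_{m+1}=\bar\alpha_m u_m+\bar q_m$, $u_0=\|P_0\|_2$), the formula for $\varpi_m$ displayed in Assumption \ref{ass_observable} equals $u_{m+1}$, so your claimed one-step inequality amounts to $(\bar\alpha_{m+1}-\bar\alpha_m)u_{m+1}\geq \bar q_m-\bar q_{m+1}$, which fails, for instance, when $\bar\alpha_{m+1}<\bar\alpha_m$ and $\bar q_{m+1}<\bar q_m$; likewise the domination $\varpi_m\geq u_m$ itself can fail when $\bar\alpha_m<1$. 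In fairness, the root cause is an off-by-one inconsistency in the paper: its proof of this lemma silently uses $\varpi_{k+1}=\|P_0\|_2\prod_{i=0}^{k}\bar\alpha_i+\sum_{s=1}^{k}\bigl(\bar q_{s-1}\prod_{l=s}^{k}\bar\alpha_l\bigr)+\bar q_k$, which is exactly $u_{k+1}$ and does not match the formula displayed in Assumption \ref{ass_observable}. If you adopt that exact unrolled expression as the definition of $\varpi$, your induction closes with equality and your argument coincides with the paper's; as written, against the assumption's displayed formula, the justification of your final step is invalid.
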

\begin{proof}
See Appendix \ref{app_lem_obser}.
\end{proof}
{Different from \eqref{Observability_matrix2} in Assumption \ref{ass_observable},  \eqref{Observability_matrix} in Lemma \ref{lem_sufficient_obser} utilizes $\varPi_{k}$ given by Lemma \ref{prop_1}.
	We note that Lemma \ref{lem_sufficient_obser} is provided  for the proof of Theorem \ref{theorem_boundedness}.}
\begin{lemma}\label{lem_multi_bounds}
		If Assumption \ref{ass_structure} holds, then 
			 	$$\sup\limits_{k\in\mathbb{N}}\{\mu_{k} F_{k}\varPi_{k}F_{k}^T\}<\infty.$$
\end{lemma}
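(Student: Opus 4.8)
The plan is to reduce the assertion to a scalar recursion along the sparse set of times at which the multiplicative noise is active, and then exploit the geometric contraction built into Assumption \ref{ass_structure}. First note that $\mu_k F_k\varPi_k F_k^T=0$ whenever $\mu_k=0$, so the supremum can only be attained on the index set $\mathbb{K}_T$ defined in \eqref{noise_set}; it therefore suffices to bound $S_t:=\mu_{k_t}F_{k_t}\varPi_{k_t}F_{k_t}^T$ uniformly in $t$. (If $\{k:\mu_k>0\}$ is finite the supremum is over finitely many nonzero terms and the claim is immediate, so I assume $\mathbb{K}_T$ is infinite.) Since each $\varPi_k\succeq 0$ and $F_k$ is nonsingular, $S_t\succeq 0$ and $\varPi_{k_t}=\mu_{k_t}^{-1}F_{k_t}^{-1}S_tF_{k_t}^{-T}$.

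Next I would unroll the Lyapunov recursion $\varPi_{k+1}=A_k\varPi_kA_k^T+\mu_kF_k\varPi_kF_k^T+Q_k$ over one inter-event interval $[k_t,k_{t+1}]$. Because $\mu_k=0$ for $k_t<k<k_{t+1}$, the only surviving multiplicative contribution is the term $S_t$ injected at $k_t$, and propagation with the transition matrix gives
\[
\varPi_{k_{t+1}}=\Phi_{k_{t+1},k_t}\varPi_{k_t}\Phi_{k_{t+1},k_t}^T+\Phi_{k_{t+1},k_t+1}S_t\Phi_{k_{t+1},k_t+1}^T+\mathcal{Q}'_{k_{t+1},k_t},
\]
where $\mathcal{Q}'_{k_{t+1},k_t}$ collects the additive-noise terms and differs from $\mathcal{Q}_{k_{t+1},k_t}$ only by a single-step shift of the transition matrix and index range. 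Conjugating by $\mu_{k_{t+1}}F_{k_{t+1}}(\cdot)F_{k_{t+1}}^T$, substituting $\varPi_{k_t}=\mu_{k_t}^{-1}F_{k_t}^{-1}S_tF_{k_t}^{-T}$ into the homogeneous term, and taking $\|\cdot\|_2$ (submultiplicativity together with $\|MSM^T\|_2\le\|M\|_2^2\|S\|_2$ for $S\succeq 0$) yields a scalar inequality $\|S_{t+1}\|_2\le\rho_{k_t}\|S_t\|_2+c_t$, with $c_t=\|\mu_{k_{t+1}}F_{k_{t+1}}\mathcal{Q}'_{k_{t+1},k_t}F_{k_{t+1}}^T\|_2$ uniformly bounded by \eqref{eq_Q_sup} (up to bounded factors from \eqref{eq_A}).

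Finally I would iterate this one-step bound. Telescoping gives $\|S_{t+1}\|_2\le\big(\prod_{\tau=0}^{t}\rho_{k_\tau}\big)\|S_0\|_2+\sum_{s=0}^{t}\big(\prod_{\tau=s+1}^{t}\rho_{k_\tau}\big)c_s$. Bounding each product by the geometric estimate \eqref{eq_exp} and using $c_s\le\bar c:=\sup_t c_t<\infty$ produces $\|S_{t+1}\|_2\le M\varrho^{t}\|S_0\|_2+\bar c M\sum_{s=0}^{t}\varrho^{\,t-s-1}$, where the sum is a convergent geometric series bounded uniformly in $t$ because $\varrho\in(0,1)$. Hence $\sup_t\|S_t\|_2<\infty$, which gives $\sup_k\|\mu_kF_k\varPi_kF_k^T\|_2<\infty$, i.e.\ the assertion.

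I expect the main obstacle to be the bookkeeping in the second step, namely matching the derived per-step gain to $\rho_{k_t}$ exactly. The homogeneous term produces the factor $\|F_{k_{t+1}}\Phi_{k_{t+1},k_t}F_{k_t}^{-1}\|_2^2$ directly, but the multiplicative forcing is injected at $k_t$ and propagated by $\Phi_{k_{t+1},k_t+1}$ rather than $\Phi_{k_{t+1},k_t}$; reconciling $\Phi_{k_{t+1},k_t}=\Phi_{k_{t+1},k_t+1}A_{k_t}$ and absorbing the single factor $A_{k_t}$ (together with the analogous shift relating $\mathcal{Q}'_{k_{t+1},k_t}$ to $\mathcal{Q}_{k_{t+1},k_t}$) is exactly where the uniform bounds $\lambda_1 I_n\le A_kA_k^T\le\lambda_2 I_n$ in \eqref{eq_A} are needed to keep the per-step gain controlled by $\rho_{k_t}$. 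Some care is also required for the first segment (from $k=0$ to the first event time) and to confirm that $c_t$ is indeed controlled by \eqref{eq_Q_sup}.
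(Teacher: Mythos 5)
Your proposal follows essentially the same route as the paper's proof in Appendix \ref{app_lem_sufficient}: restrict attention to the event times $k_t\in\mathbb{K}_T$ in \eqref{noise_set} (off which $\mu_k F_k\varPi_k F_k^T=0$), unroll the recursion of Lemma \ref{prop_1} over one inter-event interval, conjugate by $\mu_{k_{t+1}}F_{k_{t+1}}(\cdot)F_{k_{t+1}}^T$ to obtain a perturbed linear recursion for $\Theta_{k_t}:=\mu_{k_t}F_{k_t}\varPi_{k_t}F_{k_t}^T$, take spectral norms to get $\|\Theta_{k_{t+1}}\|_2\le\rho_{k_t}\|\Theta_{k_t}\|_2+c_t$, and conclude from \eqref{eq_exp} and \eqref{eq_Q_sup}. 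The only cosmetic difference is the last step, where the paper invokes a stability result for perturbed difference equations (Elaydi) and you telescope the inequality explicitly; these are the same argument.

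The substantive issue is exactly the bookkeeping obstacle you flag, and it does not close as easily as you hope. The paper never reconciles the shift: its asserted inter-event identity \eqref{eq_delta} places $\Phi_{k_{t+1},k_t}$ around the multiplicative injection, so its per-step gain is $\rho_{k_t}$ verbatim; your unrolling, which produces $\Phi_{k_{t+1},k_t+1}$, is the correct one (the paper's identity is off by a conjugation with $A_{k_t}$). But absorbing that factor via \eqref{eq_A} only gives $\|F_{k_{t+1}}\Phi_{k_{t+1},k_t+1}\|_2^2\le\lambda_1^{-1}\|F_{k_{t+1}}\Phi_{k_{t+1},k_t}\|_2^2$, hence a per-step gain bounded by $\max\{1,\lambda_1^{-1}\}\,\rho_{k_t}$, and this constant compounds in the telescoped product: \eqref{eq_exp} then yields bounds of order $\max\{1,\lambda_1^{-1}\}^{l-s}M\varrho^{l-s}$, which decay geometrically only if $\varrho<\lambda_1$ --- automatic when $\lambda_1\ge 1$, but not implied by Assumption \ref{ass_structure} when $\lambda_1<1$. (The analogous index shift in your forcing term $c_t$ is harmless because it does not compound, although making it uniform also requires slightly more than \eqref{eq_Q_sup} literally states.) So there is a genuine residual gap, sitting exactly where you predicted; in fairness, it is inherited from the paper --- whose stated recursion \eqref{eq_delta} would need the same correction, after which its own appeal to \eqref{eq_exp} faces the identical difficulty --- rather than introduced by your argument.
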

\begin{proof}
See Appendix \ref{app_lem_sufficient}.
\end{proof}
{Lemma \ref{lem_multi_bounds} is useful in the proof of the following theorem.}
Next we state our main result on Algorithm \ref{alg:A}: the estimation MSE of $e_{k,i}:=\hat{x}_{k,i}-x_{k} $ is bounded.
\begin{theorem}\label{theorem_boundedness} 
{ 	Suppose  system \eqref{system_all}--\eqref{system_all2} satisfies Assumptions~\ref{ass_noise}, \ref{ass_observable}--\ref{ass_structure} and that $\mathcal{G}$ is strongly connected. Then, the estimation MSE for Algorithm \ref{alg:A} is uniformly  bounded for all sensors, i.e., 
	there exists a positive scalar $ \eta$ such that}
\begin{align*}
\sup_{k\geq   N+\bar N}\lambda_{max}\left( E\{e_{k,i}e_{k,i}^T\}\right) \leq \frac{\eta}{\alpha}, \forall i\in\mathcal{V},
\end{align*}
	where $\alpha$ is given in Assumption \ref{ass_observable}.	
\end{theorem}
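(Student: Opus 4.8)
The plan is to bound the unconditional error covariance by first passing to the deterministic covariance surrogates $P_{k,i}$ and then proving a uniform lower bound on the information matrices $P_{k,i}^{-1}$. By Lemma \ref{lem_consistent}, $\{\hat x_{k,i},P_{k,i}\}$ is conditionally consistent given $\mathcal W_k$, so $E\{e_{k,i}e_{k,i}^T\mid\mathcal W_k\}\le P_{k,i}$ with $P_{k,i}$ measurable w.r.t. $\mathcal W_k$; taking total expectation gives $E\{e_{k,i}e_{k,i}^T\}\le E\{P_{k,i}\}$. The only randomness entering the recursions for $\bar P_{k,i},\tilde P_{k,i},P_{k,i}$ is through the channel-noise matrices $\mathcal D_{k,i,j}$, which obey the deterministic bounds $-\mathcal D_{i,j}\le\mathcal D_{k,i,j}\le\mathcal D_{i,j}$ of Assumption \ref{ass_noise}. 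Hence I would bound $P_{k,i}$ \emph{pathwise}, uniformly over the sample space, so that $E\{P_{k,i}\}$ inherits the same bound; concretely, it suffices to show $P_{k,i}^{-1}\ge(\alpha/\eta)I_n$ for a constant $\eta>0$ independent of $k$, $i$, and the sample path.

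The key preliminary step is to convert the three filtering operations into an information-domain recursion losing only multiplicative constants. With the optimal gain of Lemma \ref{lem_K}, the Woodbury identity gives $\tilde P_{k,i}^{-1}=\bar P_{k,i}^{-1}+\bar C_{k,i}^T\bar R_{k,i}^{-1}\bar C_{k,i}$, with $\bar C_{k,i},\bar R_{k,i}$ as in Lemma \ref{lem_sufficient_obser}. Since $\bar P_{k,i}\ge Q_{k-1}\ge\inf_sQ_s>0$ and $\sup_k\tau_{k,i}^2C_{k,i}^TR_{k,i}^{-1}C_{k,i}<\infty$ by Assumption \ref{ass_noise}, one obtains a uniform bound $\tilde P_{k,i}^{-1}\le\bar\omega I_n$ (equivalently, the covariance is bounded below). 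This is exactly what absorbs the channel-noise inflation: for PSD $X,\Delta$ with $X^{-1}\le\bar\omega I_n$ one has $(X+\Delta)^{-1}\ge(1+\bar\omega\lambda_{\max}(\Delta))^{-1}X^{-1}$, so, using $\bar{\tilde P}_{k,i,j}+\mathcal D_{i,j}\le\tilde P_{k,j}+2\mathcal D_{i,j}$, the fusion step collapses to $P_{k,i}^{-1}\ge c\sum_{j\in\mathcal N_i}a_{i,j}\tilde P_{k,j}^{-1}$ for a constant $c>0$ depending only on $\bar\omega,\mathcal D_{i,j},\varUpsilon_{i,j}$. Analogously, using $A_kA_k^T\ge\lambda_1I_n$ from Assumption \ref{ass_structure}, the bound $\sup_k\|\mu_kF_k\varPi_kF_k^T+Q_k\|_2<\infty$ from Lemma \ref{lem_multi_bounds}, and the covariance lower bound, the prediction gives $\bar P_{k,i}^{-1}\ge c'A_{k-1}^{-T}P_{k-1,i}^{-1}A_{k-1}^{-1}$ for a constant $c'>0$ (note $A_{k-1}$ is invertible since $A_{k-1}A_{k-1}^T\ge\lambda_1 I_n>0$).

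Setting $m=k-N-\bar N$, I would then iterate these two inequalities from $k$ down to $m$. Each step loses a factor $cc'$ and conjugates the propagated information by $A^{-1}$, while every update injects a fresh term $\bar C_{j,\ell}^T\bar R_{j,\ell}^{-1}\bar C_{j,\ell}$ at node $\ell$. The essential bookkeeping is that, for each sensor $\ell$ and each $j\in[m,m+\bar N]$, strong connectivity together with $a_{i,i}>0$ guarantees a directed walk of length $k-j\le N+\bar N$ from $\ell$ to $i$ whose fusion-weight product is at least $\delta^{N+\bar N}$, where $\delta=\min\{a_{i,j}:a_{i,j}>0\}>0$. Retaining one such walk per pair $(\ell,j)$ lower-bounds $P_{k,i}^{-1}$ by $(cc'\delta)^{N+\bar N}\sum_{\ell=1}^{N}\sum_{j=m}^{m+\bar N}\Phi_{k,j}^{-T}\bar C_{j,\ell}^T\bar R_{j,\ell}^{-1}\bar C_{j,\ell}\Phi_{k,j}^{-1}$. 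Using $\Phi_{k,j}=\Phi_{k,m}\Phi_{j,m}^{-1}$ I would factor this as $\Phi_{k,m}^{-T}(\sum_{\ell}\sum_{j=m}^{m+\bar N}\Phi_{j,m}^T\bar C_{j,\ell}^T\bar R_{j,\ell}^{-1}\bar C_{j,\ell}\Phi_{j,m})\Phi_{k,m}^{-1}$, recognize the central factor as the collective observability Gramian, and invoke Lemma \ref{lem_sufficient_obser} to bound it below by $\alpha I_n$. Finally $\|\Phi_{k,m}\|_2^2\le\lambda_2^{N+\bar N}$ (from $A_kA_k^T\le\lambda_2I_n$) yields $\Phi_{k,m}^{-T}\Phi_{k,m}^{-1}\ge\lambda_2^{-(N+\bar N)}I_n$, so $P_{k,i}^{-1}\ge(cc'\delta/\lambda_2)^{N+\bar N}\alpha\,I_n=(\alpha/\eta)I_n$ with $\eta=(\lambda_2/(cc'\delta))^{N+\bar N}$. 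Combined with the first paragraph, this gives $\lambda_{\max}(E\{e_{k,i}e_{k,i}^T\})\le\eta/\alpha$ for all $k\ge N+\bar N$.

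I expect the combinatorial propagation accounting of the third paragraph to be the main obstacle: one must simultaneously track the spatial spread of each sensor's measurement information through the fusion weights, the temporal propagation through $\Phi_{k,j}$, and the two degradation sources — the channel-noise inflation $2\mathcal D_{i,j}+\varUpsilon_{i,j}$ in fusion and the process-noise term in prediction — while keeping all constants uniform in $k$ and in the sample path. The reduction in the second paragraph to a constant-loss information recursion, via the uniform covariance lower bound and the inequality $(X+\Delta)^{-1}\ge(1+\bar\omega\lambda_{\max}(\Delta))^{-1}X^{-1}$, is precisely the device that decouples the noise handling from the observability-and-connectivity argument and makes the accounting tractable.
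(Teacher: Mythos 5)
Your proposal is correct and follows essentially the same route as the paper's proof: conditional consistency reduces the problem to a pathwise lower bound on the information matrices $P_{k,i}^{-1}$; the channel-noise inflation is absorbed as a constant multiplicative loss (the paper's $1/(1+\vartheta_0)$, your $c$) using the uniform bound $\tilde P_{k,i}^{-1}\leq\bar\omega I_n$; the prediction step loses a constant factor (the paper's $\bar\eta$ via Lemma~1 of the Battistelli--Chisci reference, your $c'$); and unrolling the recursion over $N+\bar N$ steps, lower-bounding the fusion weights via strong connectivity (your one-walk-per-pair bound is exactly a lower bound on the entries of $\mathcal{A}^s$ used in the paper), factoring through $\Phi_{k,m}$, and invoking Lemma~\ref{lem_sufficient_obser} plus the bound $A_kA_k^T\leq\lambda_2 I_n$ yields the $\alpha$-dependent bound. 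The only differences are presentational (explicit walks versus powers of $\mathcal{A}$, and an explicit absorption inequality where the paper merely asserts the existence of $\vartheta_0$).
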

\begin{proof}
See Appendix \ref{app_thm_bound}.
\end{proof}

Theorem \ref{theorem_boundedness} states that a larger  $\alpha$ can lead to a smaller upper bound of the  MSE. Thus, increasing observability ($\bar C_{k,i}$) and reducing noise interference ($\bar R_{k,i}$) can both contribute to improving  estimation performance of the DRKF in Algorithm~~\ref{alg:A}.

\section{DRKF with  a sliding window}
{ In this section, we modify the DRKF algorithm to include also past estimates received from neighbors. The presented DRKF with sliding-window fusion (DRKF-SWF) algorithm is shown to give bounded MSE. In the numerical simulation in next section, it is shown to sometimes outperform the DRKF algorithm.}

Since  the  estimates $\{\hat{\tilde x}_{k,i,j}$, $\bar{\tilde P}_{k,i,j},j\in\mathcal{N}_i\}$  have been corrupted by  the channel noise through \eqref{eq_corrupt1}, designing a distributed filter simply based on the latest estimates may lead to performance degradation if these estimates have been seriously deteriorated. In this case, we  fuse the past estimates received  from neighbors. This leads to  a better estimate than that of simply fusing current estimates. 
To decide which past estimates to use, a sliding window with length $L\geq 1$ is introduced.
For $l=0,\dots,L,$ we denote 
\begin{equation}\label{eq_notation}
\begin{split}
&\check x_{k-l,j}:=\hat{\tilde x}_{k-l,i,j}\\[0.3em]
&\check P_{k-l,j}:=\bar{\tilde P}_{k-l,i,j}+\mathcal{D}_{i,j}+\varUpsilon_{i,j}.
\end{split}
\end{equation}
By Lemma \ref{lem_consistent}, $\{\check x_{k,j},\check P_{k-l,j}\}$ is  conditionally consistent given the channel noise $\sigma$-algebra $\mathcal{W}_{k}=\sigma(\mathcal{D}_{t,i,j},1\leq t\leq k,i,j,\in\mathcal{V})$.
Sensor $i$  has the available messages $\{\check x_{l,j},\check P_{l,j}\}_{l=k-L+1}^{k}$ from sensor $j$. 
We denote
\begin{align}\label{eq_pred}
\begin{split}
&(\check x_{k,j}^{1},\check P_{k,j}^{1}):=(f_{0}(\check x_{k,j}),g_{0}(\check P_{k,j})):=(\check x_{k,j},\check P_{k,j})\\[0.3em]
&(\check x_{k,j}^{2},\check P_{k,j}^{2}):=(f_{1}(\check x_{k-1,j}),g_{1}(\check P_{k-1,j}))\\[0.3em]
&\qquad\qquad\quad\vdots\\[0.3em]
&(\check x_{k,j}^{L},\check P_{k,j}^{L}):=(f_{L-1}(\check x_{k-L+1,j}),g_{L-1}(\check P_{k-L+1,j})),
\end{split}
\end{align}
where for $l=1,\dots,L-1$,
\begin{align}\label{eq_pre_operator}
\begin{split}
f_{l}(\check x_{k-l,j})&=f_{1}(f_{l-1}(\check x_{k-l,j}))\\[0.3em]
g_{l}(\check P_{k-l,j})&=g_{1}(g_{l-1}(\check P_{k-l,j}))\\[0.3em]
f_{1}(\check x_{k-l,j})&=A_{k-l}\check x_{k-l,j}\\[0.3em]
g_{1}(\check P_{k-l,j})&=A_{k-l}\check P_{k-l,j}A_{k-l}^T+ Q_{k-l}\\[0.3em]
&\quad+\mu_{k-l} F_{k-l}\varPi_{k-l}F_{k-l}^T.
\end{split}
\end{align}
At time $k$, based on the local knowledge and the information received from neighbors, sensor $i$  can fuse the messages $\{\check x_{l,j},\check P_{l,j},j\in\mathcal{V}_i\}_{l=k-L+1}^{k}$   to obtain a better estimate of $x_{k}$. By \eqref{error_notations}, $\{\check x_{l,j},\check P_{l,j},j\in\mathcal{V}_i\}_{l=k-L+1}^{k}$ are all conditionally consistent given  $\mathcal{W}_{k}=\sigma(\mathcal{D}_{t,i,j},1\leq t\leq k,i,j,\in\mathcal{V})$.

Let 
\begin{align}\label{eq_fused_2}
&\hat x_{k,i}=P_{k,i}\sum_{s=1}^L\sum_{j\in \mathcal{N}_{i}}a_{i,j,k}^{s}(\check P_{k,j}^{s})^{-1}\check x_{k,j}^{s}\\	
&P_{k,i}=\bigg(\sum_{s=1}^L\sum_{j\in \mathcal{N}_{i}}a_{i,j,k}^{s}(\check P_{k,j}^{s})^{-1}\bigg)^{-1},
\end{align}
where $a_{i,j,k}^{s}$ is  element $(i,j)$  of $\bar{\mathcal{A}_{k}}\in\mathbb{R}^{N\times NL}$ which is the CI fusion weight matrix for $\{\check x_{k,j}^{s},\check P_{k,j}^{s},j\in\mathcal{V}_i\}_{l=k-L+1}^{k}$. In the following, the design of $\bar{\mathcal{A}_{k}}$ is studied.
By the proof of Lemma~\ref{lem_consistent} and \eqref{eq_pred}, 	$\{\check x_{k,j}^{s},\check P_{k,j}^{s},j\in\mathcal{V}_i\}_{l=k-L+1}^{k}$ are  conditionally consistent given $\mathcal{W}_{k}=\sigma(\mathcal{D}_{t,i,j},1\leq t\leq k,i,j,\in\mathcal{V})$.
The design of $\bar{\mathcal{A}_{k}}$ is given by solving the following optimization problem.

{ 
\begin{align}\label{optim_11}
\begin{split}
	\underset{a_{i,j,k}^{s},j\in\mathcal{N}_i}{\text{minimize}}\qquad& \Tr(\mathcal{J}_{k,i}^{-1})\\
	\text{subject to}\qquad&\\
	& \mathcal{J}_{k,i}>0\\[0.3em]
	& 0\leq a_{i,j,k}^{s}\leq 1,\\[0.3em]
	& \sum_{s=1}^L\sum_{j\in \mathcal{N}_{i}}a_{i,j,k}^{s}=1
\end{split}
\end{align}
where $\mathcal{J}_{k,i}=\sum_{s=1}^L\sum_{j\in \mathcal{N}_{i}}a_{i,j,k}^{s}(\check P_{k,j}^{s})^{-1}-
\sum_{j\in \mathcal{N}_{i}}a_{i,j}\check P_{k,j}^{-1}.$
The optimal solution to \eqref{optim_11} is denoted by $\bar a_{i,j,k}^s,j\in \mathcal{N}_{i},s=1,\dots,L$.
According to \cite{He2017Consistent}, the problem in \eqref{optim_11}  is  convex and equivalent  to  an SDP problem, which   can be effectively solved by many existing algorithms if the problem is feasible. If the problem is infeasible, we use the same fusion 
approach as Algorithm \ref{alg:A}, i.e.,  $ \bar{\mathcal{A}_{k}}=\left(\begin{matrix}
\mathcal{A}&0^{N\times (N-1)L}
\end{matrix}\right)$. The
 feasibility of the SDP is equivalent to the feasibility test problem of linear matrix inequality \cite{boyd1994linear}.}
Due to resource constraints, it may be undesirable to solve the online optimization problem \eqref{optim_11} at each time.  Suppose  sensor $i$ has the ability to solve \eqref{optim_11} at time instants $\{k_s\}_{s=1}^{\infty}$, subject to 
\begin{align*}
\mod(k_s,\Delta_{i})=0,
\end{align*}
 where $\mod(a,b)$ is the remainder operator of $a/b$ and $\Delta_{i}\in\mathbb{Z}^+$ is the time interval length within which sensor $i$ can not solve the optimization problem. In other words, at time instants $\{k_s\}_{s=1}^{\infty}$, each sensor employs \eqref{eq_fused_2} to obtain a fused estimate, and for other instants, it utilizes the fusion methods in Algorithm \ref{alg:A} based on the latest estimates from its neighbors.
We provide the DRKF-SWF  in Algorithm \ref{alg:B}. {  Compared with \cite{ji2017distributed,liu2016minimum,li2017distributed,wen2016recursive,He2017Consistent,he2019distributed}, Algorithm \ref{alg:B}  utilizes the past information more efficiently and considers the limitation of step-wise optimization.
	 	  The computational burden of  Algorithm \ref{alg:B}, in addition to that of Algorithm \ref{alg:A}, is that it solves the SDP convex  optimization problem \eqref{optim_11} for every $\Delta_{i}$. Thus, also Algorithm~\ref{alg:B} scales to large networks, as such optimization problems are easy to solve.	 
	The difficulty in the implementation of Algorithm \ref{alg:B} is that solving the optimization problem \eqref{optim_11} needs more computational resources if the dimension of the system state increases.
}
\begin{algorithm}
	\caption{Distributed robust Kalman filter with sliding-window fusion  (DRKF-SWF):}
	\label{alg:B}
	\begin{algorithmic}
	{ 	\STATE {\textbf{Initial setting: } \\[0.3em]
			$\{L,\Delta_{i},\hat x_{0,i},P_{0,i},\varPi_{0},\mathcal{D}_{i,j},\varUpsilon_{i,j},j\in\mathcal{N}_i,i\in\mathcal{V}\}$.\\[0.3em]}}
		\STATE {\textbf{Prediction:} Same as Algorithm \ref{alg:A}.}\\[0.3em]
		\STATE {\textbf{Update:} Same as Algorithm \ref{alg:A}.}\\[0.3em]
		\STATE {\textbf{Local Fusion:} For each sensor $i$:} \\[0.3em]
		\textbf{if} $\mod(k,\Delta_{i})=0$ and \eqref{optim_11} has a feasible solution:  \\[0.3em]		
		$\qquad\hat x_{k,i}=P_{k,i}\sum_{s=1}^L\sum_{j\in \mathcal{N}_{i}}\bar a_{i,j,k}^{s}(\check P_{k,j}^{s})^{-1}\check x_{k,j}^{s}$\\[0.3em]	
		$\qquad P_{k,i}=\bigg(\sum_{s=1}^L\sum_{j\in \mathcal{N}_{i}}\bar a_{i,j,k}^{s}(\check P_{k,j}^{s})^{-1}\bigg)^{-1}$,\\
		\hangafter 1 where  $\check P_{k,j}^{s}$ and $\check x_{k,j}^{s}$ are given in (\ref{eq_pred}), and $\{\bar a_{i,j,k}^{s}\}_{s=1}^L$ are given by solving \eqref{optim_11};\\[0.3em]
		\textbf{else} \\[0.3em]
		$\qquad\hat x_{k,i}=P_{k,i}\sum_{j\in \mathcal{N}_{i}}a_{i,j}\big(\bar{\tilde P}_{k,i,j}+\mathcal{D}_{i,j}+\varUpsilon_{i,j}\big)^{-1}\hat{\tilde x}_{k,i,j}$\\	
		$\qquad P_{k,i}=\bigg(\sum_{j\in \mathcal{N}_{i}}a_{i,j}(\bar{\tilde P}_{k,i,j}+\mathcal{D}_{i,j}+\varUpsilon_{i,j})^{-1}\bigg)^{-1}$,\\[0.3em]
				where $\hat{\tilde x}_{k,i,j}$ and $\bar{\tilde P}_{k,i,j}$ are given in \eqref{eq_corrupt1}.
	\end{algorithmic}
\end{algorithm}

The following lemma shows that Algorithm \ref{alg:B} is  conditionally consistent given the channel noise $\sigma$-algebra $\mathcal{W}_{k}$.
\begin{lemma}\label{lem_consistent2}  
	Consider  system \eqref{system_all}--\eqref{system_all2} satisfying Assumption~\ref{ass_noise}. Then for Algorithm \ref{alg:B}, 
	 the pairs $\{\bar x_{k,i},\bar P_{k,i}\}$, $\{\tilde x_{k,i}, \tilde P_{k,i}\}$, and $\{\hat x_{k,i},P_{k,i}\}$ are   conditionally consistent given $\mathcal{W}_{k}$.
\end{lemma}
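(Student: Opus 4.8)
The plan is to argue by induction on $k$, exploiting that the prediction and update steps of Algorithm \ref{alg:B} coincide exactly with those of Algorithm \ref{alg:A}; the only genuinely new ingredient is the sliding-window fusion. The base case $k=0$ is immediate from $E\{(\hat x_{0,i}-x_0)(\hat x_{0,i}-x_0)^T\}\leq P_{0,i}$ in Assumption \ref{ass_noise}. For the inductive step, assume $\{\hat x_{k-1,i},P_{k-1,i}\}$ is conditionally consistent given $\mathcal W_{k-1}$. Because $\{\bar x_{k,i},\bar P_{k,i}\}$ and $\{\tilde x_{k,i},\tilde P_{k,i}\}$ are computed by the same formulas as in Algorithm \ref{alg:A}, their conditional consistency given $\mathcal W_k\supseteq\mathcal W_{k-1}$ follows verbatim from the corresponding parts of the proof of Lemma \ref{lem_consistent}; in particular the prediction step again relies on Lemma \ref{prop_1} to dominate the multiplicative-noise contribution $\mu_{k-1}F_{k-1}\varPi_{k-1}F_{k-1}^T$.

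It then remains to treat the fusion, which splits into two branches. In the \textbf{else} branch the fusion is identical to that of Algorithm \ref{alg:A}, so conditional consistency is inherited directly from Lemma \ref{lem_consistent}. The new content is the sliding-window branch. First I would establish that each forward-propagated pair $\{\check x_{k,j}^{s},\check P_{k,j}^{s}\}$ is conditionally consistent given $\mathcal W_k$: by \eqref{eq_notation} and Lemma \ref{lem_consistent} the raw pair $\{\check x_{k-l,j},\check P_{k-l,j}\}$ is conditionally consistent, and applying the operators $f_l,g_l$ in \eqref{eq_pre_operator} merely iterates the one-step prediction map, each stage of which preserves conditional consistency by exactly the prediction argument of Lemma \ref{lem_consistent} (again invoking Lemma \ref{prop_1} at each of the $l$ stages). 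This gives $E\{(\check x_{k,j}^{s}-x_k)(\check x_{k,j}^{s}-x_k)^T\mid\mathcal W_k\}\leq\check P_{k,j}^{s}$ for all $s,j$.

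With every fused input conditionally consistent, I would then invoke the covariance-intersection consistency property \cite{Niehsen2002Information} used in the proof of Lemma \ref{lem_consistent}: since the weights $\bar a_{i,j,k}^{s}$ from \eqref{optim_11} satisfy $\bar a_{i,j,k}^{s}\geq 0$ and $\sum_{s,j}\bar a_{i,j,k}^{s}=1$, the fused pair $\{\hat x_{k,i},P_{k,i}\}$ in \eqref{eq_fused_2} is conditionally consistent regardless of the unknown cross-correlations among the inputs. Concretely, writing the fused error as $P_{k,i}\sum_{s,j}\bar a_{i,j,k}^{s}(\check P_{k,j}^{s})^{-1}(\check x_{k,j}^{s}-x_k)$ and taking $E\{\cdot\mid\mathcal W_k\}$, the CI inequality bounds the resulting quadratic form by $P_{k,i}$, closing the induction.

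The step I expect to require the most care is the measurability bookkeeping for the conditioning on $\mathcal W_k$. The weights $\bar a_{i,j,k}^{s}$ solve the SDP \eqref{optim_11}, whose data $\check P_{k,j}^{s}$ are built from $\bar{\tilde P}_{k-l,i,j}=\tilde P_{k-l,j}+\mathcal D_{k-l,i,j}$ and deterministic bounds, hence are $\mathcal W_k$-measurable; consequently both $P_{k,i}$ and the weights act as constants under $E\{\cdot\mid\mathcal W_k\}$, which is exactly what lets the CI argument go through conditionally. The system and estimate-channel noises $w,\epsilon,\gamma,v,\varepsilon$, being independent of $\mathcal W_k$, are the only randomness averaged, with the $\varepsilon_{k,i,j}$ contribution absorbed by the $\varUpsilon_{i,j}$ term inside $\check P_{k,j}^{s}$.
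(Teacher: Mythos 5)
Your proposal is correct and follows essentially the same route as the paper, whose proof simply states that the argument of Lemma \ref{lem_consistent} carries over once one accounts for the CI fusion in \eqref{eq_fused_2} and the $\mathcal{W}_{k}$-adaptedness of $K_{k,i}$; your induction, the propagation of conditional consistency through $f_l,g_l$ via the prediction argument and Lemma \ref{prop_1}, and the CI inequality with $\mathcal{W}_{k}$-measurable weights are exactly the details the paper leaves implicit.
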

\begin{proof}
	Similar to the proof of Lemma \ref{lem_consistent} but considering the CI fusion in \eqref{eq_fused_2} and the fact that $K_{k,i}$ is adapted to $\mathcal{W}_{k}=\sigma(\mathcal{D}_{t,i,j},1\leq t\leq k,i,j,\in\mathcal{V})$. 
\end{proof}
{Lemma~\ref{lem_consistent2}, corresponding to   Lemma~\ref{lem_consistent}, shows that Algorithm~\ref{alg:B} shares the same conditional consistency as Algorithm~\ref{alg:A}. }
Algorithm~\ref{alg:B} is better than Algorithm~\ref{alg:A} in the following sense.
{ 
\begin{proposition}\label{lem_com}
		Consider  system \eqref{system_all}--\eqref{system_all2} satisfying Assumption \ref{ass_noise}. 
Under the same initial setting and the channel noise  $\sigma$-algebra $\mathcal{W}_{k}=\sigma(\mathcal{D}_{t,i,j},1\leq t\leq k,i,j,\in\mathcal{V})$, for Algorithms \ref{alg:A}--\ref{alg:B}, it holds that
\begin{align}\label{eq_ine}
P_{k,i}^{B}\leq P_{k,i}^{A},
\end{align}
where $P_{k,i}^{A}$ and $P_{k,i}^{B}$ are the $P_{k,i}$ matrix of Algorithm \ref{alg:A} and Algorithm \ref{alg:B}, respectively. 
\end{proposition}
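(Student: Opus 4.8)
The plan is to argue by strong induction on the time index $k$, carried out pathwise for each fixed realization of the channel noise generating $\mathcal{W}_{k}$; since $K_{k,i}$ is adapted to $\mathcal{W}_{k}$ and the fused quantities involve only $\mathcal{D}_{k,i,j}\in\mathcal{W}_{k}$ together with deterministic data, the matrices $\bar P_{k,i}$, $\tilde P_{k,i}$, $P_{k,i}$ are $\mathcal{W}_{k}$-measurable and the claimed inequality is a pathwise comparison. The base case is immediate from the common initialization, $P_{0,i}^{B}=P_{0,i}^{A}$. The induction hypothesis is $P_{m,i}^{B}\leq P_{m,i}^{A}$ for all $m<k$ and all $i\in\mathcal{V}$, and the aim is to propagate this ordering through the three stages of \eqref{filter_stru} to obtain $P_{k,i}^{B}\leq P_{k,i}^{A}$.

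First I would push the ordering through prediction and update. In the prediction step $\bar P_{k,i}=A_{k-1}P_{k-1,i}A_{k-1}^{T}+\mu_{k-1}F_{k-1}\varPi_{k-1}F_{k-1}^{T}+Q_{k-1}$, the last two terms depend only on system data and on $\varPi_{k-1}$, which is common to both algorithms by Lemma \ref{prop_1}, while congruence by $A_{k-1}$ preserves the positive semidefinite order; hence $\bar P_{k,i}^{B}\leq\bar P_{k,i}^{A}$. For the update I would rewrite the closed-form covariance with the optimal gain of Lemma \ref{lem_K} in information form via the matrix inversion lemma, $\tilde P_{k,i}^{-1}=\bar P_{k,i}^{-1}+\tau_{k,i}^{2}C_{k,i}^{T}(R_{k,i}+\varphi_{k,i}C_{k,i}\varPi_{k}C_{k,i}^{T})^{-1}C_{k,i}$. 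The added information term is identical for the two algorithms, so order-reversal of matrix inversion converts $\bar P_{k,i}^{B}\leq\bar P_{k,i}^{A}$ into $\tilde P_{k,i}^{B}\leq\tilde P_{k,i}^{A}$. Adding the common bounds $\mathcal{D}_{i,j}+\varUpsilon_{i,j}$ and the fixed realization of $\mathcal{D}_{k,i,j}$ yields, in the notation of \eqref{eq_notation}, $\check P_{k,j}^{B}\leq\check P_{k,j}^{A}$ for every neighbor $j$; and because the window uses past-time inputs $\check P_{k-l,j}$, all ordered by the strong hypothesis, the congruence-plus-common-offset map $g_{l}$ of \eqref{eq_pre_operator} transports this to $\check P_{k,j}^{s,B}\leq\check P_{k,j}^{s,A}$ for every $s$.

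The decisive step is the fusion stage, which I would handle through the intermediate naive fusion computed on Algorithm B's own inputs, $\tilde P_{k,i}^{B,\mathrm{naive}}:=\big(\sum_{j\in\mathcal{N}_{i}}a_{i,j}(\check P_{k,j}^{B})^{-1}\big)^{-1}$. On one side, monotonicity of the covariance-intersection map (inversion, nonnegative weighting, inversion) together with $\check P_{k,j}^{B}\leq\check P_{k,j}^{A}$ gives $\tilde P_{k,i}^{B,\mathrm{naive}}\leq P_{k,i}^{A}$. On the other side I would compare $P_{k,i}^{B}$ with $\tilde P_{k,i}^{B,\mathrm{naive}}$: in the ``else'' branch of Algorithm \ref{alg:B} these two coincide, while in the optimized branch the feasibility constraint $\mathcal{J}_{k,i}>0$ in \eqref{optim_11} reads exactly $(P_{k,i}^{B})^{-1}-(\tilde P_{k,i}^{B,\mathrm{naive}})^{-1}>0$, whence order-reversal of inversion gives $P_{k,i}^{B}\leq\tilde P_{k,i}^{B,\mathrm{naive}}$. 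Chaining the two bounds closes the induction: $P_{k,i}^{B}\leq\tilde P_{k,i}^{B,\mathrm{naive}}\leq P_{k,i}^{A}$.

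The main obstacle is conceptual rather than computational: because the two algorithms diverge over time, the fusion inputs $\check P_{k,j}^{B}$ and $\check P_{k,j}^{A}$ are not equal, so the constraint $\mathcal{J}_{k,i}>0$ cannot be read off directly as the target comparison between B and A. Inserting the intermediate $\tilde P_{k,i}^{B,\mathrm{naive}}$ — which is precisely the quantity subtracted inside $\mathcal{J}_{k,i}$ — is what separates ``B improves on its own naive fusion'' (supplied by the optimization constraint) from ``B's inputs are already better than A's'' (supplied by the inductive monotonicity through prediction and update). A secondary point requiring care is to verify that the windowed predicted covariances $\check P_{k,j}^{s}$ inherit the ordering, which holds because each $g_{l}$ is monotone in the positive semidefinite order and every input it acts on is ordered by the strong induction hypothesis.
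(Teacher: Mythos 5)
Your proof is correct, and its decisive mechanism is the same one the paper uses: in the optimized branch the feasibility constraint $\mathcal{J}_{k,i}>0$ of \eqref{optim_11} forces the fused matrix below the CI fusion with the nominal weights $a_{i,j}$, and in the else branch the two fusion rules coincide. The difference is completeness, and it is worth recording. The paper's proof consists only of that fusion-stage observation, implicitly treating the inputs to the two fusions as identical; but $\mathcal{J}_{k,i}$ is built from Algorithm \ref{alg:B}'s own received matrices $\check P_{k,j}$, so the constraint by itself yields only $P_{k,i}^{B}\leq\big(\sum_{j\in\mathcal{N}_i}a_{i,j}(\check P_{k,j}^{B})^{-1}\big)^{-1}$ --- your ``naive'' fusion on B's inputs --- and not yet $P_{k,i}^{B}\leq P_{k,i}^{A}$, since the two algorithms' trajectories, hence their transmitted $\tilde P_{k,j}$'s, diverge after the first optimized fusion. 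Your strong induction supplies exactly this missing link: the ordering $P_{m,j}^{B}\leq P_{m,j}^{A}$, $m<k$, is pushed through prediction (congruence plus common additive terms), through the update written in information form $\tilde P_{k,i}^{-1}=\bar P_{k,i}^{-1}+\tau_{k,i}^{2}C_{k,i}^{T}\bar R_{k,i}^{-1}C_{k,i}$ (valid for the gain of Lemma \ref{lem_K} by the matrix inversion lemma, with the information increment common to both algorithms), and through the pathwise-common channel offsets $\mathcal{D}_{k,i,j}+\mathcal{D}_{i,j}+\varUpsilon_{i,j}$ --- this is where the hypothesis of a common channel-noise realization, i.e.\ the same $\mathcal{W}_k$, is genuinely used --- giving $\check P_{k,j}^{B}\leq\check P_{k,j}^{A}$ and closing the chain $P_{k,i}^{B}\leq\tilde P_{k,i}^{B,\mathrm{naive}}\leq P_{k,i}^{A}$. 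One small economy: you do not actually need to propagate the ordering through the window maps $g_{l}$ to compare against Algorithm \ref{alg:A}, since the windowed terms $\check P_{k,j}^{s}$, $s\geq 2$, enter only B's side of the constraint; the current-time comparison $\check P_{k,j}^{B}\leq\check P_{k,j}^{A}$ suffices. In short, your argument is not a different route but a repaired version of the paper's one-line proof, and what it buys is that the proposition holds as a statement about two diverging filter trajectories rather than about a single fusion step with identical inputs.
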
 
}
\begin{proof}
	If  $\mod(k,\Delta_{i})=0$ and \eqref{optim_11}  is feasible,  the constraint of \eqref{optim_11} $\mathcal{J}_{k,i}>0$  ensures that Algorithm \ref{alg:B} has a smaller $P_{k,i}$. Otherwise,   the fusion scheme of Algorithm \ref{alg:B} is the same as Algorithm \ref{alg:A}, which also ensures \eqref{eq_ine}.
\end{proof}

Proposition \ref{lem_com} shows that compared to Algorithm \ref{alg:A}, Algorithm \ref{alg:B}   has a smaller upper bound of the MSE.
	A larger window parameter $L$ can lead to a smaller objective function of \eqref{optim_11}, but the computation will increase as well.
Also,  the time length $\Delta_{i}$ influences the estimation performance, since a larger~$\Delta_{i}$ means that sensor~$i$ does not solve the optimization problem \eqref{optim_11} for a longer time interval.
	{ The parameters $L$ and $\Delta_{i}$ can be chosen based on the computational and communication ability of the sensor network. Furthermore, let $T$ be the time length of interest, then Algorithm \ref{alg:B} degenerates to Algorithm \ref{alg:A} if $\Delta_{i}>T$.  The boundedness of the MSE for Algorithm \ref{alg:B} is presented in the following.}

\begin{theorem}\label{theorem_boundedness2}
		Suppose  system \eqref{system_all}--\eqref{system_all2} satisfies Assumptions~\ref{ass_noise}, \ref{ass_observable}--\ref{ass_structure} and that $\mathcal{G}$ is strongly connected. Then, the estimation MSE for Algorithm \ref{alg:B} is uniformly  bounded for all sensors,  , i.e., there exists a positive scalar $\tilde \eta$ such that
\begin{align*}
\sup_{k\geq   N+\bar N}\lambda_{max}\left( E\{e_{k,i}e_{k,i}^T\}\right) \leq \frac{\tilde \eta}{\alpha}, \forall i\in\mathcal{V},
\end{align*}
where $\alpha$ is given in Assumption \ref{ass_observable}.	
\end{theorem}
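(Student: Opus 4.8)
The plan is to deduce Theorem~\ref{theorem_boundedness2} from Theorem~\ref{theorem_boundedness} by a comparison argument, exploiting that Algorithm~\ref{alg:B} differs from Algorithm~\ref{alg:A} only in the fusion step and that its fusion is never worse, as quantified by Proposition~\ref{lem_com}. Concretely, I would bound the mean square error of Algorithm~\ref{alg:B} by the \emph{expected} consistency matrix $E\{P_{k,i}^{B}\}$, dominate the latter by $E\{P_{k,i}^{A}\}$, and finally invoke the uniform bound on the expected consistency matrix that underlies Theorem~\ref{theorem_boundedness}.

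First, by Lemma~\ref{lem_consistent2} the pair $\{\hat x_{k,i},P_{k,i}^{B}\}$ is conditionally consistent given $\mathcal{W}_{k}$, i.e. $E\{e_{k,i}e_{k,i}^{T}|\mathcal{W}_{k}\}\le P_{k,i}^{B}$ with $P_{k,i}^{B}$ measurable with respect to $\mathcal{W}_{k}$. Taking total expectation and using the tower property yields $E\{e_{k,i}e_{k,i}^{T}\}\le E\{P_{k,i}^{B}\}$. Next, Proposition~\ref{lem_com} gives the pathwise matrix inequality $P_{k,i}^{B}\le P_{k,i}^{A}$ (a.s. given $\mathcal{W}_{k}$), hence $E\{P_{k,i}^{B}\}\le E\{P_{k,i}^{A}\}$. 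Finally, the analysis behind Theorem~\ref{theorem_boundedness} establishes the MSE bound precisely by showing that the expected consistency matrix of Algorithm~\ref{alg:A} is uniformly bounded, $\lambda_{\max}\!\big(E\{P_{k,i}^{A}\}\big)\le \eta/\alpha$ for $k\ge N+\bar N$; moreover this observability-based estimate applies verbatim to $E\{P_{k,i}^{B}\}$, since the prediction and update recursions and Assumptions~\ref{ass_noise}, \ref{ass_observable}--\ref{ass_structure} are shared by both algorithms. Chaining these inequalities gives $\lambda_{\max}(E\{e_{k,i}e_{k,i}^{T}\})\le \lambda_{\max}(E\{P_{k,i}^{A}\})\le \eta/\alpha$, so the claim holds with $\tilde\eta=\eta$, the sharper bound on $E\{P_{k,i}^{B}\}$ in fact permitting a smaller $\tilde\eta$.

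The main obstacle is to be sure that $P_{k,i}^{B}\le P_{k,i}^{A}$ holds \emph{for every} $k$ and not merely at a single isolated fusion step, because the fused covariance feeds back into the next prediction through $\bar P_{k+1,i}=A_{k}P_{k,i}A_{k}^{T}+\mu_{k}F_{k}\varPi_{k}F_{k}^{T}+Q_{k}$. I would settle this by an induction on $k$ that propagates the comparison through the recursion: $\varPi_{k}$ is identical for both algorithms, as it is computed from system knowledge only (Lemma~\ref{prop_1}), so $P_{k-1,i}^{B}\le P_{k-1,i}^{A}$ forces $\bar P_{k,i}^{B}\le\bar P_{k,i}^{A}$; the optimal update of Lemma~\ref{lem_K} is monotone in $\bar P_{k,i}$, giving $\tilde P_{k,i}^{B}\le\tilde P_{k,i}^{A}$ and hence $\bar{\tilde P}_{k,i,j}^{B}\le\bar{\tilde P}_{k,i,j}^{A}$ for the same realization of $\mathcal{D}_{k,i,j}$; monotonicity of the CI fusion then returns $P_{k,i}^{B}\le P_{k,i}^{A}$, with the switching rule guaranteeing that at steps where the SDP~\eqref{optim_11} is not solved or is infeasible Algorithm~\ref{alg:B} reduces exactly to Algorithm~\ref{alg:A}, so the inequality is preserved and is strengthened whenever \eqref{optim_11} is feasible. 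This monotone-propagation step, rather than the final chaining, is where the technical care lies.
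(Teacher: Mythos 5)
Your proposal is correct and takes essentially the same route as the paper: the paper's proof of Theorem~\ref{theorem_boundedness2} is precisely the chain you describe, namely conditional consistency of Algorithm~\ref{alg:B} (Lemma~\ref{lem_consistent2}), the pathwise comparison $P_{k,i}^{B}\leq P_{k,i}^{A}$ (Proposition~\ref{lem_com}), and the uniform bound $P_{k,i}^{A}\leq S_*^{-1}(\alpha)$ established in the proof of Theorem~\ref{theorem_boundedness}. The monotone-propagation induction you single out as the technical core (prediction, optimal update, and CI fusion all preserving the matrix ordering, with the constraint $\mathcal{J}_{k,i}>0$ handling the SDP steps) is exactly what the paper leaves implicit in its one-line proof of Proposition~\ref{lem_com}, so spelling it out is a sound elaboration rather than a different argument.
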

\begin{proof}
It follows from Lemma \ref{lem_com} and the proof of Theorem~\ref{theorem_boundedness}.
\end{proof}
{Theorem \ref{theorem_boundedness2}, corresponding to  Theorem \ref{theorem_boundedness}, shows that Algorithm \ref{alg:B} shares the same MSE boundedness  as Algorithm~\ref{alg:A} under mild conditions. }

\section{Numerical Simulations}
In this section, we study two examples to  validate the effectiveness of the proposed algorithms and the theoretical results developed in the paper.
\subsection{Example 1}
For the temperature  field  in Fig. \ref{fig:random_illustration}, we suppose that the initial state $x_0$ and sensor measurement noise  are generated by  independent standard normal distributions. 
The fading factors  $\gamma_{k,i}$ follow independent uniform distributions, $i=1,2,3,4$. The time sequence $\{t_k\}$ lies in the interval $[0,10]$ with uniform sampling step $0.1$, thus $k=0,1,\dots,100.$ 
The matrices and scalars in  (\ref{system_all})  are assumed to be
\begin{align}\label{set1}
&{ A_{k}=\left(
	\begin{array}{cc}
	0.8\times(1+0.01 t_k) & 0.01\\
	0.1 & 0.98\\
	\end{array}
	\right)\nonumber}\\
&F_{k}=I_4,Q_{k}=0.1\times I_2,P_0=I_2,\mu_{k}=0.1\times (t_k+2)^{-1}\nonumber\\
&R_{k,1}=0.07,R_{k,2}=0.08,R_{k,3}=R_{k,4}=0.09\nonumber\\
&\tau_{k,1}=0.85,\varphi_{k,1}=0.8\times 10^{-3},C_{k,1}=\left(
\begin{array}{cc}
0 & 1\\
\end{array}
\right)\\
&\tau_{k,2}=0.15,\varphi_{k,2}=0.8\times 10^{-3},C_{k,2}=\left(
\begin{array}{cc}
0 & 1\\
\end{array}
\right)\nonumber\\
&\tau_{k,3}=0.20,\varphi_{k,3}=0.8\times 10^{-3},C_{k,3}=\left(
\begin{array}{cc}
0 & 1\\
\end{array}
\right)\nonumber\\
&\tau_{k,4}=0.85,\varphi_{k,4}=0.8\times 10^{-3},C_{k,4}=\left(
\begin{array}{cc}
1 & 0\\
\end{array}
\right).\nonumber
\end{align}

The initial  setting of the filters is $\hat x_{i,0}=\textbf{1}_2$ and $P_{i,0}=100\times I_2$, $\forall i\in\mathcal{V}$. The weighted adjacency matrix is
\begin{align*}
\mathcal{A}=[a_{i,j}]=\left(\begin{matrix}
0.3&0.7&0&0\\
0&0.4&0.6&0\\
0&0&0.3&0.7\\
0.3&0.4&0&0.3
\end{matrix}\right).
\end{align*}  
The channel noise is assumed to be mutually independent and uniformly distributed over $[-1,1]$. We choose $\varUpsilon_{i,j}=\mathcal{D}_{i,j}=I_2,i,j\in\mathcal{V}$.
\begin{figure}
	\centering
	\includegraphics[width=0.5\textwidth]{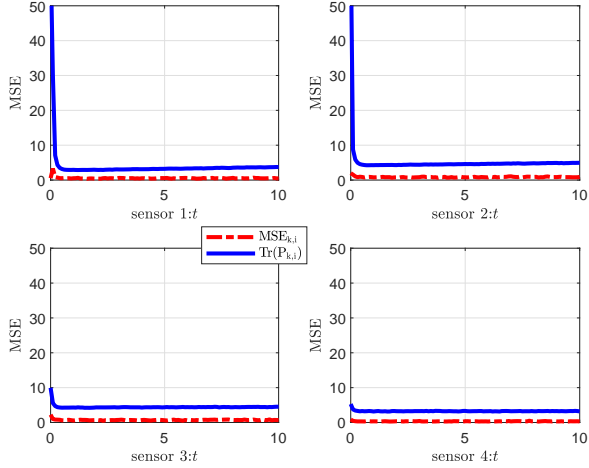}
	\caption{Consistent estimates of DRKF}
	\label{consistency_DRKF}
\end{figure}
\begin{figure}
	\centering
	\includegraphics[width=0.5\textwidth]{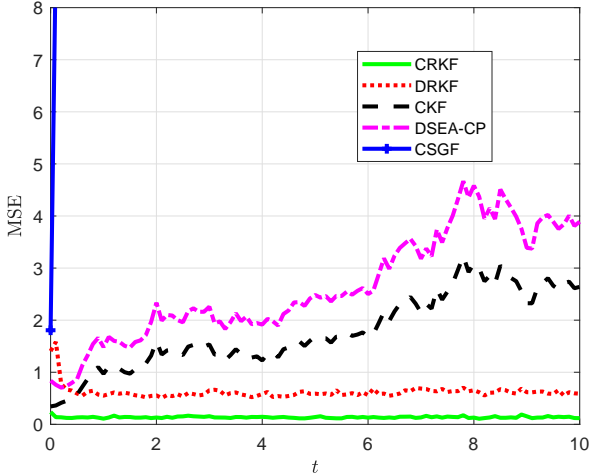}
	\caption{Comparison of tracking performance for the proposed filter DRKF together with filters from the literature}
	\label{performance_compare}
\end{figure}
\begin{figure}
	\centering
	\includegraphics[width=0.5\textwidth]{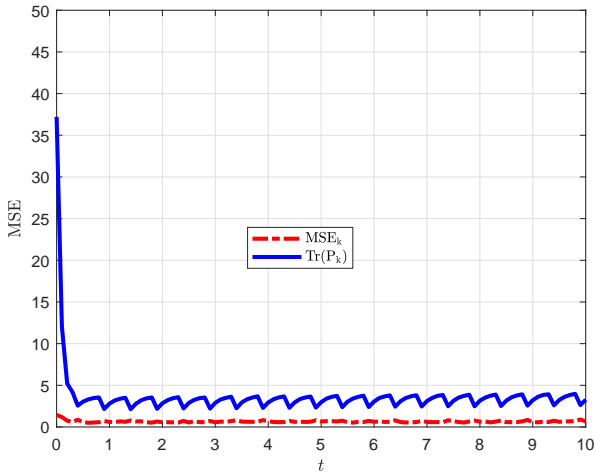}
	\caption{Consistent estimates of  DRKF-SWF with $L=2$ and $\Delta_{i}=5$}
	\label{consistency_DRKF2}
\end{figure}
\begin{figure}
	\centering
	\includegraphics[width=0.5\textwidth]{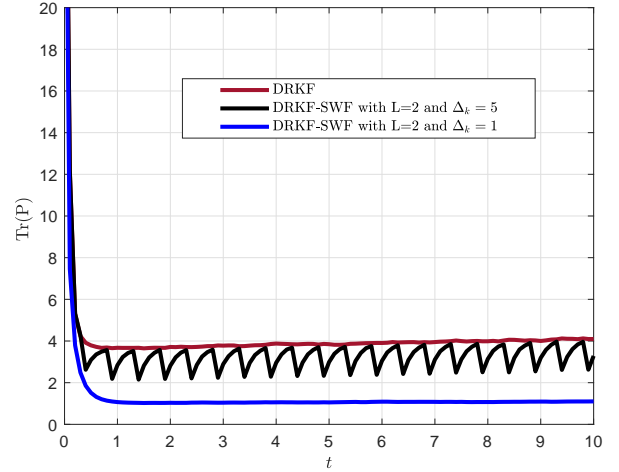}
	\caption{Comparison between DRKF and DRKF-SWF}
	\label{compare_A12}
\end{figure}
We conduct Monte Carlo experiments, in which $N_t=100$  runs  are performed.  
We denote
{ 
	\begin{align}\label{objective_sim}
	\begin{split}
	\rm{MSE}_{k,i}&=\frac{1}{N_t}\sum_{j=1}^{N_t}(\hat x_{k,i}^j-x_{k}^j)^T(\hat x_{k,i}^j-x_{k}^j)\\
	\Tr(P_{k,i})&=\frac{1}{N_t}\sum_{j=1}^{N_t}\Tr(P_{k,i}^j),
	\end{split}
	\end{align}}
where $\hat x_{k,i}^j$ and $P_{k,i}^j$ are the state estimate and parameter matrix of the $j$th run of sensor $i$. 

We show how $\Tr(P_{k,i})$ is an upper bound of $\rm{MSE}_{k,i}$. Fig. \ref{consistency_DRKF}   shows that this  holds for each sensor. Let $\rm{MSE}_{k}=\frac{1}{|\mathcal{V}|}\sum_{i\in\mathcal{V}}\rm{MSE}_{k,i}$, $\Tr(P_{k})=\frac{1}{|\mathcal{V}|}\sum_{i\in\mathcal{V}}\Tr(P_{k,i})$.
{ To illustrate the relationship between the initial conditions and the output of the DRKF, we provide  Table \ref{table_con}, where 
	$\rm{MSE}_{\max}=\max\limits_{k=51,\dots,100} \rm{MSE}_k$, and $P_{\max}=\max\limits_{k=51,\dots,100} P_k$. Here we just consider  $k\in \{51,\dots,100\}$, since the estimation error after $k=51$ is relatively steady.  Table \ref{table_con} shows that $P_{0,i}$ and $\varPi_{0}$ have little influence on the output of the DRKF, but $\mathcal{D}_{i,j}$ and $\mathcal{\varUpsilon}_{i,j}$   affect $ \rm{MSE}_{\max}$ and $P_{\max}$, as expected. } 
\begin{table}[htbp]
	{ 
	\caption{$\rm{MSE}_{\max}$ and $P_{\max}$ of the DRKF with different initial Quantities}  
	\label{table_con}   
	\scalebox{1}{
		{\renewcommand{\arraystretch}{1.0}  
			\begin{tabular}{|c|c|c|c|c|c|c|} 
				\hline   
				Case number  & $P_{0,i}$&$\varPi_{0}$ &$\mathcal{D}_{i,j}$ & $\mathcal{\varUpsilon}_{i,j}$&$ \rm{MSE}_{\max}$&$P_{\max}$\\
				\hline
				1 & $100I_2$& $I_2$ & $I_2$& $I_2$&0.74&4.15\\					\hline
				2 & $500I_2$& $I_2$ & $I_2$& $I_2$&0.75&4.15\\					\hline
				3 & $100I_2$& $5I_2$ & $I_2$& $I_2$&0.73&4.16\\					\hline
				4 & $100I_2$& $I_2$ & $5I_2$& $I_2$&0.89&9.38\\					\hline
				5 & $100I_2$& $I_2$ & $I_2$& $5I_2$&0.90&9.38\\
				\hline  
			\end{tabular}
		}
	}
}
\end{table}

We compare the proposed DRKF algorithm with   centralized Kalman filter (CKF), centralized robust Kalman filter (CRKF) \cite{Tugnait1981Stability,yang2002robust}, collaborative scalar-gain estimator (CSGF) \cite{Khan2014Collaborative}, and distributed state
estimation with consensus on the posteriors (DSEA-CP) \cite{Battistelli2014Kullback}.  {  The centralized filters CKF and CRKF  utilize the observations of all sensors without suffering communication noise.  Moreover, for the considered scenario, CRKF
is the 	optimal robust filter in the sense that its filter gain  ensures the minimum bound of MSE \cite{Tugnait1981Stability,yang2002robust}.}
The MSE of these  algorithms  are shown in Fig. \ref{performance_compare}, which indicates that the DRKF achieves better estimation accuracy than  CSGF, DSEA-CP, and DRKF.
Fig. \ref{consistency_DRKF2} shows that  DRKF-SWF provides  bounded mean square estimation errors and consistent estimates. 
By setting $\Delta_{i}=\Delta$, $i\in\mathcal{V}$, Fig. \ref{compare_A12} shows that   DRKF-SWF with sliding-window length $L=2$ provides  smaller upper bounds than the DRKF by decreasing the interval length $\Delta_i$.

{ \subsection{Example 2}
Consider the undirected  network with 50 sensors  in Fig. \ref{network2}. The weights of the adjacency matrix are given by
\begin{align*}
a_{i,j}=&\frac{1}{\max\{d_i,d_j\}},\quad i\in\mathcal{V},j\in\mathcal{N}_i,j\neq i\\[0.3em]
a_{i,i}=&1-\underset{j\in\mathcal{N}_i,j\neq i}{\sum}a_{i,j}.
\end{align*}
where $d_i$ is the cardinality of the set $\mathcal{N}_i$.
	We assume $A_k=\begin{pmatrix}
	1.05& -0.1\\
	0.1 & 0.98
	\end{pmatrix}$, $\mu_{k}=0$, $R_{k,i}=1,i\in\{1,\dots,50\}$.   For each sensor, the pair of measurement vector and fading statistics  are randomly chosen out of the four combinations in \eqref{set1}. The rest of the simulation settings  are the same as in Example~1. Fig.~\ref{consistency2} shows the bounded MSE and its upper bound, which verifies the estimation consistency of  Algorithm~\ref{alg:A}.  In Fig. \ref{compare2}, we   compare the estimation performance of the DRKF with the four algorithms mentioned in Example~1. The result shows that the proposed DRKF achieves better performance than the CSGF, DSEA-CP, and CKF, whose estimation errors are diverging fast due to the instability of the system dynamics (i.e., $\rho(A_{k})=1.02>1$). The performance of Algorithm \ref{alg:A}, i.e., DRKF, is close to CRKF.
}

\begin{figure}
	\centering
	\includegraphics[width=0.5\textwidth]{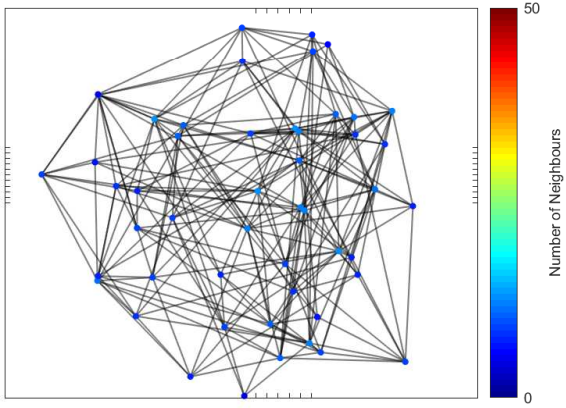}
	\caption{A sensor network with  50 nodes }
	\label{network2}
\end{figure}
\begin{figure}
	\centering
	\includegraphics[width=0.5\textwidth]{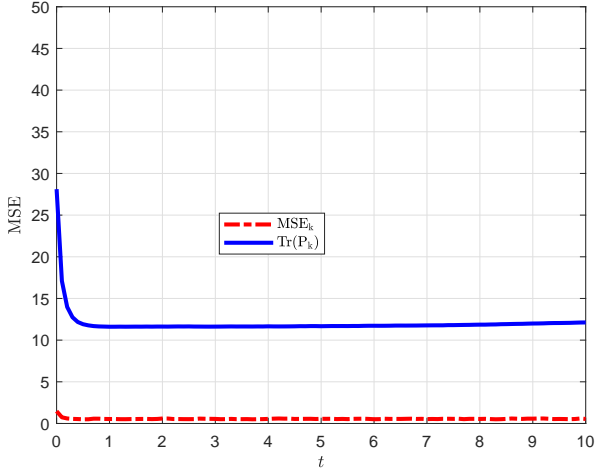}
	\caption{The consistency of DRKF}
	\label{consistency2}
\end{figure}
\begin{figure}
	\centering
	\includegraphics[width=0.5\textwidth]{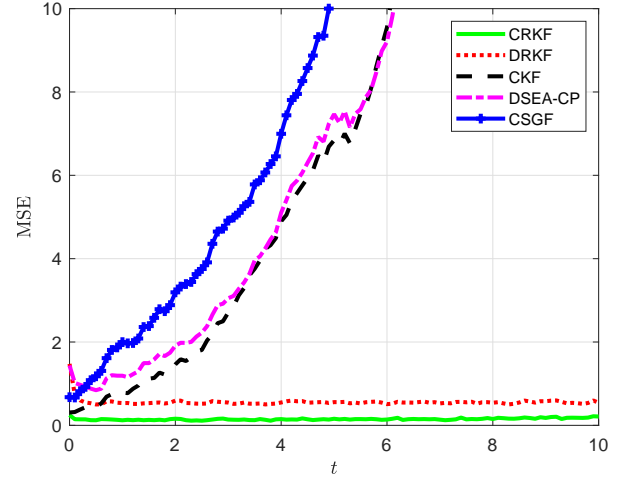}
	\caption{Comparison between  DRKF with filters from the literature}
	\label{compare2}
\end{figure}

\section{Conclusion}
This paper  studied a distributed robust state estimation problem for a class of discrete-time stochastic  systems with multiplicative noise and degraded measurements over corrupted communication channels.   Employing  local imprecise statistics, we first proposed a three-step DRKF. Then, under some mild conditions, we proved  that its  MSE is uniformly upper bounded by a constant matrix after a finite transient time. The finite time interval is related to the collective observability and the network size. { A switching fusion scheme based on a sliding-window fusion method was proposed as a DRKF-SWF algorithm to obtain a smaller upper bound of the MSE. By considering extended computational ability of the sensors, the  DRKF-SWF shows that better performance can be achieved.}

%

\appendix

\subsection{Proof of Lemma \ref{prop_1}}\label{app_lem_moment}
	We use an inductive method to prove this lemma.
	At the initial time, it follows from Assumption \ref{ass_noise} that $E\{x_{0}x_{0}^T\}\leq P_0=\varPi_{0}$. Suppose at time  $k$ that
	$E\{x_{k}x_{k}^T\}\leq \varPi_{k},\forall k\geq 0$. According to  (\ref{system_all}), $x_{k}$ is adapted to $\mathcal{F}_{k-1}$. By Assumption \ref{ass_noise}, we have 
	 $E\{\epsilon_kx_{k}\}=0$,   and $E\{w_{k}\epsilon_k\}=0.$
	For $E\{\epsilon_k^2x_{k}x_{k}^T\}$, it holds that $E\{\epsilon_k^2x_{k}x_{k}^T\}=E\{\epsilon_k^2\}E\{x_{k}x_{k}^T\}$, then 
	\begin{align*}
	&E\{x_{k+1}x_{k+1}^T\}\\[0.3em]
	&=E\{(A_{k}+F_{k}\epsilon_k)x_{k}x_{k}^T(A_{k}+F_{k}\epsilon_k)^T\}+E\{w_{k}w_{k}^T\}&&\\[0.3em]
	&\quad+E\{(A_{k}+F_{k}\epsilon_k)x_{k}w_{k}^T\}+E\{w_{k}x_{k}^T(A_{k}+F_{k}\epsilon_k)^T\}&&\\[0.3em]
	&\leq A_{k}E\{x_{k}x_{k}^T\}A_{k}^T+E\{\epsilon_k^2\} F_{k}E\{x_{k}x_{k}^T\}F_{k}^T+ E\{w_{k}w_{k}^T\}&&\\[0.3em]
	&\leq A_{k}\varPi_{k}A_{k}^T+\mu_{k} F_{k}\varPi_{k}F_{k}^T+ Q_{k}	=\varPi_{k+1}.&&
	\end{align*}
	Hence, we obtain  $E\{x_{k+1}x_{k+1}^T\}\leq \varPi_{k+1}$.

\subsection{Proof of Lemma \ref{lem_consistent}}\label{app_lem_consis}
{Regarding the filtering structure in \eqref{filter_stru}, for proof convenience,
	we denote the state estimation errors  by $\bar e_{k,i}=\bar x_{k,i}-x_{k},\tilde e_{k,i}=\tilde x_{k,i}-x_{k},\bar{\tilde e}_{k ,i,j}=\hat{\tilde x}_{k,i,j}-x_{k}$, and $e_{k,i}=\hat x_{k,i}-x_{k}$, respectively.   
	Then it is straightforward to obtain the dynamics of these estimation errors as follows
	\begin{equation}\label{error_notations}
	\begin{split}
	\bar e_{k,i}&=A_{k-1}e_{k-1,i}-w_{k-1}-\epsilon_{k-1} F_{k-1}x_{k-1}\\[0.3em]
	\tilde e_{k,i}&=(I_n-\tau_{k,i}K_{k,i}C_{k,i})\bar e_{k,i}\\[0.3em]
	&\qquad\qquad+K_{k,i}(v_{k,i}+(\gamma_{k,i}-\tau_{k,i})C_{k,i}x_{k})\\[0.3em]
	\bar{\tilde e}_{k ,i,j}&=\tilde e_{k,j}+\varepsilon_{k,i,j},j\in \mathcal{N}_{i}\\[0.3em]
	e_{k,i}&=\sum_{j\in \mathcal{N}_{i}}W_{k,i,j}\bar{\tilde e}_{k ,i,j}.
	\end{split}
	\end{equation}

	First, we make a conjecture that if $\{\hat x_{t-1,i},P_{t-1,i}\},t\geq 1$ is conditionally consistent given the channel noise $\sigma$-algebra $\mathcal{W}_{t-1}$, i.e.,  $E\{e_{t-1,i}e_{t-1,i}^T|\mathcal{W}_{t-1}\}\leq P_{t-1,i}$, then the pairs $\{\bar x_{t,i},\bar P_{t,i}\}, \{\tilde x_{t,i}, \tilde P_{t,i}\}, \{\hat x_{t,i},P_{t,i}\}$ are all conditionally consistent given $\mathcal{W}_{k}$.	In the following, we prove the conjecture.}

 Suppose at  time  $t=k$, the pair $\{\hat x_{k-1,i},P_{k-1,i}\}$, $k\geq 1$, is conditionally consistent given $\mathcal{W}_{k-1}$. 	According to Assumption \ref{ass_noise} and  \eqref{error_notations}, we have  $E\{\epsilon_{k-1}e_{k-1,i}x_{k-1}|\mathcal{W}_{k}\}=0$	
 and $E\{w_{k-1}e_{k-1,i}|\mathcal{W}_{k}\}=0$. It follows that
	\begin{align}
	&E\{\bar e_{k,i}\bar e_{k,i}^T|\mathcal{W}_{k}\}\nonumber\\[0.3em]
&\leq A_{k-1}E\{e_{k-1,i}e_{k-1,i}^T|\mathcal{W}_{k-1}\}A_{k-1}^T+Q_{k-1}\nonumber\\[0.3em]
&	\quad+\mu_{k-1} F_{k-1}E\{x_{k-1}x_{k-1}^T\}F_{k-1}^T\\[0.3em]
&\leq A_{k-1}P_{k-1,i}A_{k-1}^T+\mu_{k-1} F_{k-1}\varPi_{k-1}F_{k-1}^T+Q_{k-1}=\bar P_{k,i}.\nonumber
	\end{align}	
	In the  measurement update, according to \eqref{error_notations}, we have $\tilde e_{k,i}=(I_n-\tau_{k,i}K_{k,i}C_{k,i})\bar e_{k,i}+K_{k,i}v_{k,i}+(\gamma_{k,i}-\tau_{k,i})K_{k,i}C_{k,i}x_{k}$. By Assumption \ref{ass_noise}, $E\{\bar e_{k,i}\gamma_{k,i}|\mathcal{W}_{k}\}=0$ and $E\{\bar e_{k,i}v_{k,i}^T|\mathcal{W}_{k}\}=0$. 
	Since $v_{k,i}$ and $\gamma_{k,i}$ are  mutually independent and  $K_{k,i}$ is adapted to $\mathcal{W}_{k}$, we have
	\begin{align}\label{eq_P3}
	&E\{\tilde e_{k,i}\tilde e_{k,i}^T|\mathcal{W}_{k}\}&&\nonumber\\[0.3em]
	&\leq  (I_n-\tau_{k,i}K_{k,i}C_{k,i})E\{\bar e_{k,i}\bar e_{k,i}^T|\mathcal{W}_{k}\}(I_n-\tau_{k,i}K_{k,i}C_{k,i})^T&&\nonumber\\[0.3em]
	&\quad+\varphi_{k,i}K_{k,i}C_{k,i}\varPi_{k}C_{k,i}^TK_{k,i}^T+K_{k,i}R_{k,i}K_{k,i}^T&&\nonumber\\[0.3em]
	&\leq (I_n-\tau_{k,i}K_{k,i}C_{k,i})\bar P_{k,i}(I_n-\tau_{k,i}K_{k,i}C_{k,i})^T&&\nonumber\\[0.3em]
	&\quad +K_{k,i}\bigg(\varphi_{k,i}C_{k,i}\varPi_{k}C_{k,i}^T+R_{k,i}\bigg)K_{k,i}^T=\tilde{P}_{k,i}.
	\end{align}	
	Note that the communication channels are imperfect and the messages received by each sensor are polluted by the channel noise through (\ref{eq_corrupt1}).
		According to  Assumption \ref{ass_noise} and \eqref{error_notations},		
		we have
	\begin{align*}
	&E\{\bar{\tilde e}_{k ,i,j}\bar{\tilde e}_{k ,i,j}^T|\mathcal{W}_{k}\}&&\nonumber\\[0.3em]
	&=E\{(\tilde x_{k,j}+\varepsilon_{k,i,j}-x_{k})(\tilde x_{k,j}+\varepsilon_{k,i,j}-x_{k})^T|\mathcal{W}_{k}\}&&\nonumber\\[0.3em]
	&\leq E\{(\tilde x_{k,j}-x_{k})(\tilde x_{k,j}-x_{k})^T|\mathcal{W}_{k}\}+E\{\varepsilon_{k,i,j}\varepsilon_{k,i,j}^T|\mathcal{W}_{k}\}&&\nonumber\\[0.3em]
	&\leq \tilde P_{k,j}+\sup\{\varepsilon_{k,i,j}\varepsilon_{k,i,j}^T\}&&\nonumber\\[0.3em]
	&\leq \tilde P_{k,j}+\varUpsilon_{i,j}&&\nonumber\\[0.3em]
		&\leq\tilde P_{k,j}+\mathcal{D}_{k,i,j}+\mathcal{D}_{i,j}+\varUpsilon_{i,j}=\bar{\tilde P}_{k,i,j}+\mathcal{D}_{i,j}+\varUpsilon_{i,j},&&
	\end{align*}	
	where $\bar{\tilde P}_{k,i,j}$ is the received matrix by sensor $i$ from sensor $j$.
	In the local fusion step, $e_{k,i}=\sum_{j\in \mathcal{N}_{i}}W_{k,i,j}\bar{\tilde e}_{k ,i,j}.$
	Given $W_{k,i,j}$ in (\ref{eq_W}), according to  (\ref{eq_P3})  and the consistent estimation of the CI method \cite{Niehsen2002Information}, we have $	E\{e_{k,i}e_{k,i}^T|\mathcal{W}_{k}\}\leq P_{k,i}.$
	
	Thus, the conjecture holds. Then the conclusion is obtained based on the conjecture and the initial estimation condition in Assumption \ref{ass_noise}.

\subsection{Proof of Lemma \ref{lem_K}}\label{app_lem_K}
According to Lemma \ref{lem_consistent},  we have
	\begin{align}\label{pf_P}
	\tilde{P}_{k,i}&= (I_n-\tau_{k,i}K_{k,i}C_{k,i})\bar P_{k,i}(I_n-\tau_{k,i}K_{k,i}C_{k,i})^T&&\nonumber\\[0.3em]
	&\quad+K_{k,i}\bigg(\varphi_{k,i}C_{k,i}\varPi_{k}C_{k,i}^T+R_{k,i}\bigg)K_{k,i}^T&&\nonumber\\[0.3em]
	&=\bar P_{k,i}-\tau_{k,i}K_{k,i}C_{k,i}\bar P_{k,i}-\tau_{k,i}\bar P_{k,i}C_{k,i}^TK_{k,i}^T&&\\[0.3em]
	&\quad+\tau_{k,i}^2K_{k,i}C_{k,i}\bar P_{k,i}C_{k,i}^TK_{k,i}^T&&\nonumber\\[0.3em]
	&\quad+K_{k,i}\bigg(\varphi_{k,i}C_{k,i}\varPi_{k}C_{k,i}^T+R_{k,i}\bigg)K_{k,i}^T&&\nonumber\\[0.3em]
	&=\bar P_{k,i}-\tau_{k,i}K_{k,i}C_{k,i}\bar P_{k,i}-\tau_{k,i}\bar P_{k,i}C_{k,i}^TK_{k,i}^T&&\nonumber\\[0.3em]
	&\quad+K_{k,i}\Xi_{k,i}K_{k,i}^T&&\nonumber\\[0.3em]
	&=(K_{k,i}-K_{k,i}^*)\Xi_{k,i}(K_{k,i}-K_{k,i}^*)^T&&\nonumber\\[0.3em]
	&\quad+(I-\tau_{k,i}K_{k,i}^*C_{k,i})\bar P_{k,i},&&\nonumber
	\end{align}
	where $K_{k,i}^*=\tau_{k,i}\bar P_{k,i}C_{k,i}^T\Xi_{k,i}^{-1}$ and $\Xi_{k,i}=\tau_{k,i}^2C_{k,i}\bar P_{k,i}C_{k,i}^T+R_{k,i}+\varphi_{k,i}C_{k,i}\varPi_{k}C_{k,i}^T.$ 	
	Thus,  (\ref{pf_P}) shows that $\Tr(\tilde{P}_{k,i})$ is minimized  when
	$K_{k,i}=K_{k,i}^*=\tau_{k,i}\bar P_{k,i}C_{k,i}^T\Xi_{k,i}^{-1}.$
	As a result, $	\tilde P_{k,i}=(I-\tau_{k,i}K_{k,i}C_{k,i})\bar P_{k,i}.$ Since $K_{k,i}^*$ is a measurable function of $\bar P_{k,i}$, which is adapted to $\mathcal{W}_{k}$, also, $K_{k,i}^*$ is adapted to $\mathcal{W}_{k}$.

\subsection{Proof of Lemma \ref{lem_sufficient_obser}}\label{app_lem_obser}
	According to Lemma \ref{prop_1}, we have $\varPi_{k+1}=A_{k}\varPi_{k}A_{k}^T+\mu_{k} F_{k}\varPi_{k}F_{k}^T+ Q_{k}.$ 
Taking the 2-norm of both sides yields $\|\varPi_{k+1}\|_2\leq\|\varPi_{k}\|_2\left(\|A_{k}\|_2^2+\mu_{k}\|F_{k}\|_2^2 \right) + \|Q_{k}\|_2.$
Denote $\|A_{k}\|_2^2+\mu_{k}\|F_{k}\|_2^2=:\bar \alpha_{k}$ and $\|Q_{k}\|_2=:\bar q_{k}$. Then, $\|\varPi_{k+1}\|_2\leq\varpi_{k+1},$
where $\varpi_{k+1}=\|P_{0}\|_2\prod_{i=0}^{k}\bar\alpha_{i} + \sum_{s=1}^{k}\left(\bar q_{s-1}\prod_{j=s}^{k}\bar\alpha_{j} \right)+\bar q_{k}.$
It follows that $\bar R_{k,i}=:R_{k,i}+\varphi_{k,i}C_{k,i}\varPi_{k}C_{k,i}^T
\leq R_{k,i}+\varpi_{k}\varphi_{k,i}C_{k,i}C_{k,i}^T=\tilde R_{k,i}.$
If (\ref{Observability_matrix2}) is satisfied, (\ref{Observability_matrix}) holds.

\subsection{Proof of Lemma \ref{lem_multi_bounds}}\label{app_lem_sufficient}
	According to Lemma \ref{prop_1} and   Assumption \ref{ass_structure}, we have $\varPi_{k_{t+1}}=\Phi_{k_{t+1},k_{t}}\varPi_{k_{t}}\Phi_{k_{t+1},k_{t}}^T+ \mathcal Q_{k_{t+1},k_{t}}
	+\mu_{k_{t}} \Phi_{k_{t+1},k_{t}}F_{k_{t}}\varPi_{k_{t}}F_{k_{t}}^T\Phi_{k_{t+1},k_{t}}^T.$
	Multiplying from left   by $\mu_{k_{t+1}}F_{k_{t+1}}$ and from right  by $F_{k_{t+1}}^T$ yields
	\begin{align*}
	&\mu_{k_{t+1}}F_{k_{t+1}}\varPi_{k_{t+1}}F_{k_{t+1}}^T&&\\[0.3em]
	&=\mu_{k_{t+1}}F_{k_{t+1}}\Phi_{k_{t+1},k_{t}}\varPi_{k_{t}}\Phi_{k_{t+1},k_{t}}^TF_{k_{t+1}}^T&&\\[0.3em]
	&\quad+\mu_{k_{t+1}}F_{k_{t+1}}\mu_{k_{t}} \Phi_{k_{t+1},k_{t}}F_{k_{t}}\varPi_{k_{t}}F_{k_{t}}^T\Phi_{k_{t+1},k_{t}}^TF_{k_{t+1}}^T&&\\[0.3em]
	&\quad+ \mu_{k_{t+1}}F_{k_{t+1}}\mathcal Q_{k_{t+1},k_{t}}F_{k_{t+1}}^T,&&
	\end{align*}
	where $\mathcal Q_{k_{t+1},k_{t}}=\sum_{k=k_{t}}^{k_{t+1}}\Phi_{k_{t+1},k}Q_{k}\Phi_{k_{t+1},k}^T.$
	Denote $\mu_{k_{t}} F_{k_{t}}\varPi_{k_{t}}F_{k_{t}}^T=: \Theta_{k_{t}}$, then we have
	\begin{align}\label{eq_delta}
	&\Theta_{k_{t+1}}&&\nonumber\\[0.3em]
	&=\frac{\mu_{k_{t+1}}}{\mu_{k_{t}}}F_{k_{t+1}}\Phi_{k_{t+1},k_{t}}F_{k_{t}}^{-1}\Theta_{k_{t}}F_{k_{t}}^{-T}\Phi_{k_{t+1},k_{t}}^TF_{k_{t+1}}^T&&\nonumber\\[0.3em]
	&\quad+\mu_{k_{t+1}}F_{k_{t+1}}\Phi_{k_{t+1},k_{t}}\Theta_{k_{t}}\Phi_{k_{t+1},k_{t}}^TF_{k_{t+1}}^T&&\nonumber\\[0.3em]
&\quad 	+ \mu_{k_{t+1}}F_{k_{t+1}}\mathcal Q_{k_{t+1},k_{t}}F_{k_{t+1}}^T.&&
	\end{align}	
	Taking 2-norm of both sides of (\ref{eq_delta}) yields
	\begin{align}\label{eq_delta2}
	&\|\Theta_{k_{t+1}}\|_2&&\nonumber\\[0.3em]
&	\leq  \|\frac{\mu_{k_{t+1}}}{\mu_{k_{t}}}F_{k_{t+1}}\Phi_{k_{t+1},k_{t}}F_{k_{t}}^{-1}\Theta_{k_{t}}F_{k_{t}}^{-T}\Phi_{k_{t+1},k_{t}}^TF_{k_{t+1}}^T\|_2&&\nonumber\\[0.3em]
	&\quad+\|\mu_{k_{t+1}}F_{k_{t+1}}\Phi_{k_{t+1},k_{t}}\Theta_{k_{t}}\Phi_{k_{t+1},k_{t}}^TF_{k_{t+1}}^T\|_2&&\nonumber\\[0.3em]
	&\quad+ \mu_{k_{t+1}}\|F_{k_{t+1}}\mathcal Q_{k_{t+1},k_{t}}F_{k_{t+1}}^T\|_2&&\nonumber\\[0.3em]
&	\leq  \rho_{k_{t}} \|\Theta_{k_{t}}\|_2+ \mu_{k_{t+1}}\|F_{k_{t+1}}\mathcal Q_{k_{t+1},k_{t}}F_{k_{t+1}}^T\|_2.&&
	\end{align}
According to \cite{Elaydi2005An},  conditions (\ref{eq_exp}) and (\ref{eq_Q_sup}) now give $\sup\limits_{k_{t}\in\mathbb{N}}\|\Theta_{k_{t}}\|_2< \infty$, i.e., $\Theta_{k}$ is uniformly upper bounded.
\subsection{Proof of Theorem \ref{theorem_boundedness}}\label{app_thm_bound}
	Introduce 
	\begin{align*}
	S_{k,i}&:= P_{k,i}^{-1}\\[0.3em]
	\tilde Q_{k}&:= \mu_{k} F_{k}\varPi_{k}F_{k}^T+ Q_{k}\\[0.3em]
	G_{k,i}&:= \sum_{j\in \mathcal{N}_{i}}a_{i,j}\bar C_{k,j}^T\bar R_{k,j}^{-1}\bar C_{k,j}  \\[0.3em]
	\bar R_{k,i}&:= R_{k,i}+\varphi_{k,i}C_{k,i}\varPi_{k}C_{k,i}^T.
	\end{align*} 
	By Assumption \ref{ass_noise},
	\begin{align*}
	&\bar{\tilde P}_{k,i,j}+\mathcal{D}_{i,j}+\varUpsilon_{i,j}&&\\[0.3em]
	&=\tilde P_{k,j}+\mathcal{D}_{k,i,j}+\mathcal{D}_{i,j}+\varUpsilon_{i,j}&&\\[0.3em]
	&\geq\tilde P_{k,j}+\varUpsilon_{i,j}\geq \tilde P_{k,j}.&&
	\end{align*}
	As $\inf_{k\in\mathbb{N}} Q_{k}>0$, and $\sup_{k\in\mathbb{N}} \left[\tau_{k,i}^2 C_{k,i}^TR_{k,i}^{-1}C_{k,i}\right]<\infty$ in Assumption \ref{ass_noise}, there exists a scalar $\vartheta_0>0$  such that $\bar{\tilde P}_{k,i,j}+\mathcal{D}_{i,j}+\varUpsilon_{i,j}\leq (1+\vartheta_0)\tilde P_{k,j}$.

	According to Algorithm  \ref{alg:A} and Lemma \ref{lem_multi_bounds}, 
	\begin{align}\label{proof_S1}
	S_{k,i}=&\sum_{j\in \mathcal{N}_{i}}a_{i,j}(\bar{\tilde P}_{k,i,j}+\mathcal{D}_{i,j}+\varUpsilon_{i,j})^{-1}\\
	\geq	&\sum_{j\in \mathcal{N}_{i}}\frac{a_{i,j}}{1+\vartheta_0}\big(A_{k-1}S_{k-1,j}^{-1}A_{k-1}^T+\tilde Q_{k-1}\big)^{-1}+\frac{G_{k,i}}{1+\vartheta_0} \nonumber\\
	\geq&  \bar \eta A_{k-1}^{-T}(\sum_{j\in \mathcal{N}_{i}}a_{i,j}S_{ k-1,j}) A_{k-1}^{-1}+\frac{G_{k,i}}{1+\vartheta_0},\nonumber
	\end{align}
	where 	$0< \bar \eta<1$. This inequality  is obtained by Lemma 1 in \cite{Battistelli2014Kullback} using Assumption \ref{ass_structure} and $\frac{1}{1+\vartheta_0}<1.$ Let $a_{ij, k}$ be the $(i,j)$th element of $\mathcal{A}^k$.
	By recursively applying  (\ref{proof_S1})  $k\geq N+\bar N$ times, we have
	\begin{align}\label{proof_stability3}
	S_{k,i} 		\geq&\bar \eta^{ k}\Phi_{k,0}^{-T}\left[\sum_{j\in \mathcal{V}} a_{ij, k} S_{ 0,j}\right] \Phi_{k,0}^{-1}+\frac{	\bar S_{k,i} }{1+\vartheta_0}
	,
	\end{align}
	where
	\begin{align*}
	\bar S_{k,i}			=&
	\sum_{s=1}^{k}\bar \eta^{s-1}\Phi_{k,k-s+1}^{-T}\Bigg[\sum_{j\in \mathcal{V}} a_{ij,s}\tilde S_{k-s+1,j}\Bigg]\Phi_{k,k-s+1}^{-1},	
	\end{align*}	
	with $\tilde S_{k,j}=\bar C_{k,j}^T\bar R_{k,j}^{-1}\bar C_{k,j}.$
	%
	Since the first term of the right-hand side of (\ref{proof_stability3}) is positive definite, it follows that
	\begin{align}\label{eq_S_ineq}
	S_{k,i}\geq\frac{	\bar S_{k,i} }{1+\vartheta_0},\forall k\geq N+\bar N.
	\end{align}
	Since $\mathcal{G}$ is strongly connected,  $a_{ij,s}>0$ for $s\geq N-1$ \cite{He2017Consistent}.
	Supposing $\bar L=N+\bar N$, we obtain
	\begin{align}\label{proof_stability4}
	\bar S_{k,i}\geq &\sum_{s=1}^{\bar L}\bar \eta^{s-1}\Phi_{k,k-s+1}^{-T}
	\Bigg[\sum_{j\in \mathcal{V}} a_{ij,s}\tilde S_{k-s+1,j}\Bigg] \Phi_{k,k-s+1}^{-1}\nonumber\\
	\geq &a_{\min}\bar \eta^{ \bar L-1}\sum_{s=N}^{\bar L}\Phi_{k,k-s+1}^{-T}
	\Bigg[
	\sum_{j\in \mathcal{V}} \tilde S_{k-s+1,j}\Bigg] \Phi_{k,k-s+1}^{-1}\nonumber\\
	=&a_{\min}\bar \eta^{ \bar L-1}\sum_{j=1}^{N}\sum_{s=N}^{\bar L} \Phi_{k,k-s+1}^{-T}\tilde S_{k-s+1,j} \Phi_{k,k-s+1}^{-1},
	\end{align}	
	where $a_{\min}=\min_{i,j\in \mathcal{V},s\in \{N,\dots,\bar L\}}{a_{ij,s}>0}$.

	According to Assumption \ref{ass_structure}, there exists a   scalar $\beta>0$, such that $	\Phi_{k,k-\bar L+1}^{-T}\Phi_{k,k-\bar L+1}^{-1}\geq \beta I_n,\forall k\geq 0.$	
	From   Lemma \ref{lem_sufficient_obser} and $\bar L=N+\bar N$, it holds that
	\begin{align}
	&\sum_{j=1}^{N}\sum_{s=N}^{\bar L} \Phi_{k,k-s+1}^{-T}\tilde S_{k-s+1,j} \Phi_{k,k-s+1}^{-1}\nonumber\\[0.3em]
	=&\Phi_{k,k-\bar L+1}^{-T}\nonumber\\[0.3em]
	&\times\sum_{j=1}^{N}\Bigg[\sum_{s=k-\bar L+1}^{k-N+1}\Phi_{s,k-\bar L+1}^{T}\tilde S_{k-\bar L+1,j}\Phi_{s,k-\bar L+1}\Bigg]\Phi_{k,k-\bar L+1}^{-1}\nonumber\\[0.3em]
	\geq&\alpha \Phi_{k,k-\bar L+1}^{-T}\Phi_{k,k-\bar L+1}^{-1}	\geq \alpha\beta I_{n}, \forall k\geq N+\bar N.\label{eq_bounds}
	\end{align}	
	Summing up  (\ref{proof_stability4}) and (\ref{eq_bounds}) yields
	\begin{align}\label{eq_obser2}
	\bar S_{k,i}\geq& a_{\min}\bar \eta^{ \bar L-1}\alpha\beta I_{n}, \forall k\geq N+\bar N.
	\end{align}	
	Let $S_*(\alpha)=a_{\min}\bar \eta^{ \bar L-1}\alpha\beta I_{n}$. In light of (\ref{eq_S_ineq}), 
	it holds that $P_{k,i}^{-1}=S_{k,i}\geq S_*(\alpha)$, $\forall k\geq N+\bar N$. Hence, $\sup_{k\geq\bar  L}P_{k,i}\leq S_*^{-1}(\alpha)$. Since the filter is conditionally consistent, $\sup_{k\geq\bar  L}E\{(\hat x_{k,i}-x_{k})(\hat x_{k,i}-x_{k})^T|\mathcal{W}_{k}\}\leq S_*^{-1}(\alpha)$. Taking mathematical expectation of  both sides and denoting $ \eta=\frac{\bar \eta^{ 1-\bar L}}{a_{\min}}>0$,  the conclusion of the theorem holds.

\ifCLASSOPTIONcaptionsoff
  \newpage
\fi

\small
\bibliography{references_filtering2}
\bibliographystyle{ieeetr}

\end{document}